\documentclass[
twocolumn,
showpacs,preprintnumbers,amsmath,amssymb,aps,pra
]{revtex4-1}

\usepackage{graphicx}
\usepackage{bm}
\usepackage{amsfonts}
\usepackage{amscd}
\usepackage{amsmath}    
\usepackage{enumerate}

\usepackage{amssymb}
\usepackage{amsthm}

\usepackage[breaklinks]{hyperref}

\newtheorem{theorem}{Theorem}
\newtheorem{lemma}{Lemma}
\newtheorem{corollary}{Corollary}

\newcommand{\bra}[1]{\mbox{$\left\langle #1 \right|$}}
\newcommand{\ket}[1]{\mbox{$\left| #1 \right\rangle$}}
\newcommand{\braket}[2]{\mbox{$\left\langle #1 | #2 \right\rangle$}}

\newcommand{\braa}[1]{\mbox{$\langle #1 |$}}
\newcommand{\kett}[1]{\mbox{$| #1 \rangle$}}
\newcommand{\braakett}[2]{\mbox{$\langle #1 | #2 \rangle$}}

\def\tr{{\rm Tr}}

\def\IC{{\mathbb C}}

\newcommand{\maxnorm}[1]{\left\lVert #1 \right\rVert_{\text{max}}}

\def\KrausDim{{K}}
\def\ChannelDim{{n}}
\def\ChannelDimO{{m}}
\DeclareMathOperator{\myUgrp}{U}

\def\myrm{\rm}

\def\dmathX#1#2{
$$\lineskiplimit=1000pt \advance\lineskip by #1\jot 
\mathsurround=0pt \tabskip=0pt plus 1000pt
\everycr{\noalign{\penalty\interdisplaylinepenalty}}
\halign to \displaywidth{
\hfil$\displaystyle{##}$\tabskip=0pt&%
\hfil $\displaystyle{{}##{}}$\hfil &%
\hfil $\displaystyle{{}##{}}$\hfil &%
$\displaystyle{##}$\hfil \tabskip=0pt plus 1000pt minus 1000pt&%
\refstepcounter{equation}\label{##}\llap{(\theequation)}\tabskip=0pt\cr
\noalign{\ifdim \prevdepth>-1000pt \vskip -#1\jot\fi}
#2\crcr}$$}

\newcommand\undermat[2]{%
  \makebox[0pt][l]{$\smash{\underbrace{\phantom{%
    \begin{matrix}#2\end{matrix}}}_{\text{$#1$}}}$}#2}

\begin{document}

\title
{Time-Energy Costs of Quantum Measurements}

\author{Chi-Hang Fred Fung}
\email{chffung@hku.hk}
\affiliation{Department of Physics and Center of Theoretical and Computational Physics, University of Hong Kong, Pokfulam Road, Hong Kong}
\author{H.~F. Chau}
\affiliation{Department of Physics and Center of Theoretical and Computational Physics, University of Hong Kong, Pokfulam Road, Hong Kong}

\begin{abstract}
Time and energy of quantum processes are a tradeoff against each other.
We propose to ascribe to any given quantum process a time-energy cost to quantify how much computation it performs.
Here, we analyze the time-energy costs for general quantum measurements, along a similar line as our previous work for quantum channels, and prove exact and lower bound formulae for the costs.
We use these formulae to evaluate the efficiencies of actual measurement implementations.
We find that one implementation for a Bell measurement is optimal in time-energy.
We also analyze the time-energy cost for unambiguous state discrimination and find evidence that only a finite time-energy cost is needed to distinguish any number of states.
\end{abstract}

\pacs{03.67.-a, 03.67.Lx, 89.70.Eg}

\maketitle

\section{Introduction}

Quantum mechanical systems cannot evolve with an arbitrary speed and an arbitrary energy.
The evolution speed and system energy are constrained by time-energy uncertainty relations (TEURs)~\cite{Lloyd2000}.
Many TEURs have been proposed and investigated~\cite{Mandelstam1945,
Bhattacharyya1983,Anandan1990,Uhlmann1992,Vaidman1992,Pfeifer1993,Margolus1996,Margolus1998,Chau2010,
Giovannetti2003,Giovannetti2003b,Zander2007,
Taddei2013,delCampo2013} and they follow a general form in which the product of the evolution time (needed to evolve the initial state to the final state) and the system energy (or a function of the eigen-energies) is upper bounded by some number dependent on the closeness between the initial and final states.
Recognizing that time and energy are a tradeoff against each other, we proposed to regard time energy as a single measure for the resource consumed by a quantum process~\cite{Chau2011,Fung:2013:Time-energy}.
Essentially, a high time-energy cost indicates that the process 
requires a long time to complete at a low system energy level or a high system energy level for a short completion time.
We motivated definitions for the time-energy measures for unitary transformations~\cite{Chau2011} and quantum channels~\cite{Fung:2013:Time-energy} by a TEUR proved earlier~\cite{Chau2010}.
In this work, we investigate the time-energy measure for general quantum measurements also called positive operator-valued measures (POVM).
Quantum measurements are quantum evolutions of some quantum states that eventually produce classical outputs (i.e., by triggering a detector).
Thus, quantum measurements are also restrained by TEURs and the concept of time-energy cost also applies to them.
Essentially, ``easy'' measurements (e.g., directly detecting the input states) would incur small time-energy costs.
More specifically, a quantum measurement can be considered as a unitary operation in a larger Hilbert space containing the system to be measured and an ancillary system indicating the measurement outcome.
We define the time-energy cost of a measurement as the time-energy cost for this unitary operation which we have already quantified before~\cite{Chau2011,Fung:2013:Time-energy}.

The time-energy cost of a measurement given the POVM description may be used to judge the efficiency of an actual implementation.
The time-energy cost of an implementation can be computed based on the actual experimental components (such as beam splitters) used and the time-energy cost of the POVM can be computed (or bounded) using the results of this work.
A small difference between these cost values indicates that the actual implementation is quite efficient already, consuming close to the fundamental minimal time and energy to run.

In this work, 
we derive lower bounds on the time-energy cost of POVM and obtain the exact value for the time-energy cost in some special cases.
These results are applied to some examples.
In particular,
we compute the time-energy costs of linear optics based implementations of Bell measurements and a POVM with rank-2 elements, and compare them with the ideal time-energy costs given the POVM descriptions.
We find that the Bell measurement implementation that projects onto one Bell state is optimal, but that projects onto two Bell states is not.
Also, our calculation indicates that the implementation of the POVM with rank-2 elements may be far from optimal.
In addition, we study the time-energy cost for the optimal unambiguous state discrimination (USD) for distinguishing symmetric coherent states.
Interestingly, the cost lower bound increases but saturates to some value as the number of states increases.
This may indicate that a finite time-energy resource is enough to distinguish any number of states.

We motivate a time-energy measure based on the following TEUR by Chau~\cite{Chau2010}.
Given a time-independent Hamiltonian $H$ of a system, 
the time 
$t$
needed to evolve a state $\ket{\Phi}$ under the action of $H$ to a state 
whose fidelity~\footnote{We adopt the fidelity definition $F(\rho,\sigma)=\big(\tr \sqrt{\rho^{1/2} \sigma \rho^{1/2}}\big)^2$ for two quantum states $\rho$ and $\sigma$.} is less than or equal to $\epsilon$ satisfies the TEUR
\begin{align}
\label{eqn-time-energy-relation1}
t
\geq \frac{(1-\sqrt{\epsilon})\hbar}{A \sum_j |\alpha_j|^2 |E_j|}
\end{align}
where $E_j$'s are the eigenvalues of $H$ with the corresponding normalized energy eigenvectors $\ket{E_j}$'s, $\ket{\Phi}=\sum_j \alpha_j \ket{E_j}$, and $A \approx 0.725$ is a universal constant.
Based on this equation,
a weighted sum of $|t E_j|$'s 
serves as
an indicator of the time-energy resource needed to perform $U=\exp(-i H t / \hbar)$.
Thus, this motivates the following definition
of the time-energy cost of a unitary matrix $U \in \myUgrp(r)$~\cite{Chau2011}:
\begin{align}
\label{eqn-definition-maxnorm-for-U}
\maxnorm{U}&=\max_{1 \le j \le r}  |\theta_j|
\end{align}
where $U$ has eigenvalues $\exp(-i E_j t/\hbar)\equiv \exp(\theta_j)$ for $j=1,\dots,r$ and $E_j$ are the eigenvalues of the Hamiltonian $H$~\footnote{We remark that our previous works~\cite{Chau2011,Fung:2013:Time-energy} consider more general measures by taking linear combinations of $|\theta_j|$'s.  Here, we only consider the maximum $|\theta_j|$.}.
We assume that all angles are taken in the range $(-\pi,\pi]$.

The concept of the time-energy cost has been extended to quantum channels by considering a unitary extension in a larger Hilbert space and regarding the cost of the unitary as the cost of the quantum channel~\cite{Fung:2013:Time-energy}.
The time-energy resource for a quantum channel $\mathcal F$ with Kraus operators $\{F_1,\ldots, F_K\}$ is defined as
\dmathX2{
\maxnorm{\mathcal{F}} &\equiv& \min_U & \maxnorm{U}  
&eqn-energy-measure-general-channel\cr
&&
\text{s.t.} &
\mathcal{F}(\rho) = \tr_B [ U_{BA} (\ket{0}_B\bra{0} \otimes  \rho_A) U_{BA}^\dag ]
\: \forall \rho.
\cr
}
where the channel $\mathcal{F}$ acts on state $\rho$ in system $A$ and the unitary extension $U_{BA}$ includes system $B$ prepared in a standard state.
In this definition, we seek the unitary extension that consumes the least time energy.
We previously found bounds on $\maxnorm{\mathcal{F}}$ for general channels and obtained the exact value of $\maxnorm{\mathcal{F}}$ for some special channels including the depolarizing channel~\cite{Fung:2013:Time-energy}.

In this paper, we consider the time-energy cost for general quantum measurements on finite-dimensional systems.
A POVM can be cast as a quantum channel, and thus our previous result~\cite{Fung:2013:Time-energy} may be applied.
However, since there are extra unitary degree of freedom on the POVM elements and freedom in the labelings of the detection events (more explanation later), more analysis is needed to reuse the previous result for quantum channels.

We remark that a similar work by Uzdin and Gat \cite{Uzdin:2013:time-energy} derives results for the time-energy cost for USD measurements with rank-1 projectors.
In this work, we derive results for the time-energy cost for general POVM.

The organization of this paper is as follows.
We first introduce some notations and review some existing results 
in Sec.~\ref{sec-preliminary}.
These results are used to prove formulae for the time-energy cost for POVM in Sec.~\ref{sec-TE-POVM}.
In Sec.~\ref{sec-examples}, we apply the lower bound and exact formulae for the POVM time-energy cost to a few examples.
Finally, we conclude in Sec.~\ref{sec-conclusion}.

\section{Preliminary}
\label{sec-preliminary}

Denote by $\myUgrp(r)$ the group of $r \times r$ unitary matrices.
Given a matrix $U$, its $(i,j)$ element is denoted by $U(i,j)$,
row $i$ by $U(i,*)$, and column $j$ by $U(*,j)$.
We adopt the convention that $\cos^{-1}$ always returns an angle in the range $[0,\pi]$.

The quantum channel $\mathcal F$ is described by
$$
{\mathcal F}(\rho) = \sum_{i=1}^\KrausDim F_i \rho F_i^\dag
$$
where the Kraus operators are $F_i \in \IC^{m \times n}$.
We assume without loss of generality that $m \ge n$, since we can zero pad the Kraus operators and extract the non-zero subspace of the channel output.
We only consider finite-dimensional systems, i.e., $m,n < \infty$.

Define a map from a sequence of Kraus operators 
$(F_1,F_2,\ldots, F_\KrausDim)$ 
to a $\KrausDim \ChannelDimO \times \ChannelDim$ matrix 
as follows:
\begin{align}
g(F_1,F_2,\dots,F_{\KrausDim}) 
&\triangleq
\begin{bmatrix}
F_1 
\\
F_2 
\\
\vdots
\\
F_{\KrausDim}
\end{bmatrix}
\in \IC^{\KrausDim \ChannelDimO \times \ChannelDim}.
\end{align}
Because $\sum_{j=1}^{\KrausDim} F_j^\dag F_j = I$, the columns of 
$g(F_1,F_2,\dots,F_{\KrausDim})$
are orthonormal and 
$g(F_1,F_2,\dots,F_{\KrausDim})$
can be regarded as a submatrix of a unitary one.

\subsection{
Partial $U$ problem}

Problem~\eqref{eqn-energy-measure-general-channel} defines the time-energy cost for a general quantum channel.
Note that two sets of Kraus operators $\{F_1,\ldots, F_K\}$ and $\{F_1',\ldots, F_K'\}$ represent the same quantum channel if and only if $F_i'=\sum_{j=1}^K w_{ij} F_j$ for all $i$ and for some unitary matrix $[w_{ij}]$ (see Ref.~\cite{Nielsen2000}).
Thus, to solve problem~\eqref{eqn-energy-measure-general-channel}, one needs to consider all possible Kraus representations.
Let us propose a simpler but related problem, which will be useful for analyzing the time-energy cost for POVM in Sec.~\ref{sec-TE-POVM}.
Consider the time-energy cost for a sequence of Kraus operators.
We define the partial $U$ problem for the submatrix $g(F_1,F_2,\dots,F_{\KrausDim})$ as
\dmathX2{
\maxnorm{ g(F_1,F_2,\dots,F_{\KrausDim}) } \equiv\hspace{-2.6cm}& \cr
&& \displaystyle\min_{U} & 
\maxnorm{ U }&eqn-problem-partial-with-g\cr
&&\text{s.t.}&
U=
\begin{bmatrix}
F_1 
& * & * & \cdots & *
\\
F_2 
& * & \ddots & & *
\\
\vdots & \vdots &&& \vdots
\\
\undermat{n}{F_{\KrausDim}}
& * & * & \cdots & *
\end{bmatrix}
\in \myUgrp(\KrausDim m).
\cr
\cr
}
Here, the first $n$ columns are fixed and the optimization is over the remaining $\KrausDim m-n$ columns.
We proved formulae that upper and lower bound this problem in Ref.~\cite{Fung:2013:Time-energy} and we summarize the results in Appendix~\ref{app-summary}.

Note that $g(F_1,F_2,\dots,F_{\KrausDim})$
has the following property:
\begin{lemma}
\label{lemma-U-norm-unchanged-by-conjugation-2}
{\myrm
\begin{align*}
&\maxnorm{ g(F_1,F_2,\dots,F_{\KrausDim}) }
\\
=
&
\maxnorm{ g(\hat{Q}F_1Q^\dag,F_2Q^\dag,\dots,F_{\KrausDim}Q^\dag) }
\end{align*}
for any unitary matrix $Q \in \myUgrp(n)$ and 
\begin{equation}
\label{eqn-U-norm-unchanged-by-conjugation-Qhat}
\hat{Q}=
\begin{bmatrix}
Q & 0
\\
0& \bf{1}
\end{bmatrix}
\in \myUgrp(m).
\end{equation}
}
\end{lemma}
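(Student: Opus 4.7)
The plan is to exhibit a single unitary similarity that converts any completion $U$ of $g(F_1,\ldots,F_{\KrausDim})$ into a completion of $g(\hat{Q}F_1 Q^\dag, F_2 Q^\dag, \ldots, F_{\KrausDim} Q^\dag)$. Because $\maxnorm{\cdot}$ depends only on the spectrum of its unitary argument, any such similarity preserves the objective in~\eqref{eqn-problem-partial-with-g}, and minimising over $U$ then yields the claimed equality.

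First I would define the $\KrausDim\ChannelDimO \times \KrausDim\ChannelDimO$ block-diagonal unitary
\begin{equation*}
W \;:=\; \mathrm{diag}\bigl(\hat{Q},\,I_{\ChannelDimO},\,\ldots,\,I_{\ChannelDimO}\bigr)
\end{equation*}
with $\KrausDim-1$ identity blocks of size $\ChannelDimO$. The key observation is that, since $\hat{Q}$ itself is block-diagonal with $Q$ sitting in its top-left $\ChannelDim\times\ChannelDim$ corner, $W$ can \emph{also} be written as $W=\mathrm{diag}(Q,\,I_{\KrausDim\ChannelDimO-\ChannelDim})$. Thus when $W$ acts from the left on a $\KrausDim\ChannelDimO\times\KrausDim\ChannelDimO$ matrix it applies $\hat{Q}$ to the top $\ChannelDimO$-row block (sending $F_1\mapsto\hat{Q}F_1$) while leaving the other Kraus blocks alone; whereas when $W^\dag$ acts from the right it only touches the first $\ChannelDim$ columns, sending $g(F_1,\ldots,F_{\KrausDim})$ to $g(F_1 Q^\dag,\ldots,F_{\KrausDim} Q^\dag)$.

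Next, for any feasible $U$ I would form $U':=W U W^\dag$. A direct block-algebra check (which I would treat as routine bookkeeping) shows that $U'$ is unitary with first $\ChannelDim$ columns equal to $g(\hat{Q}F_1 Q^\dag, F_2 Q^\dag, \ldots, F_{\KrausDim} Q^\dag)$, so $U'$ is a feasible point for the transformed partial-$U$ problem. Since $U'$ is unitarily similar to $U$ their spectra coincide and hence $\maxnorm{U'}=\maxnorm{U}$; minimising over $U$ yields $\maxnorm{g(\hat{Q}F_1 Q^\dag, F_2 Q^\dag,\ldots,F_{\KrausDim} Q^\dag)}\le\maxnorm{g(F_1,\ldots,F_{\KrausDim})}$. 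The reverse inequality follows by re-running the construction with $Q$ replaced by $Q^\dag$, which inverts the transformation. The only substantive content is therefore the block-algebra bookkeeping for $W U W^\dag$, and the whole argument rests on the single observation that the top-left $\ChannelDim\times\ChannelDim$ corner of $\hat{Q}$ coincides with $Q$, so that one and the same conjugation $W(\cdot)W^\dag$ simultaneously realises both the left action on $F_1$ and the right action on every $F_i$.
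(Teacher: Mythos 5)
Your proof is correct and is essentially the paper's own argument: your $W=\mathrm{diag}(\hat{Q},I_{\ChannelDimO},\dots,I_{\ChannelDimO})=\mathrm{diag}(Q,I_{\KrausDim\ChannelDimO-\ChannelDim})$ is exactly the matrix $\tilde{Q}$ of Lemma~\ref{lemma-U-norm-unchanged-by-conjugation} in Appendix~\ref{app-summary}, and the paper proves the statement by observing that conjugation by this $\tilde{Q}$ preserves the spectrum while mapping feasible completions of one partial-$U$ problem bijectively onto those of the other. The only difference is that you spell out the block bookkeeping that the paper delegates to the cited lemma.
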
%
This lemma is Lemma~\ref{lemma-U-norm-unchanged-by-conjugation} in Appendix~\ref{app-summary} in another form.
This form facilitates our later analysis.

\section{Time-energy cost of POVM}
\label{sec-TE-POVM}

\begin{figure}[t!]
\begin{center}
\includegraphics[width=\columnwidth]{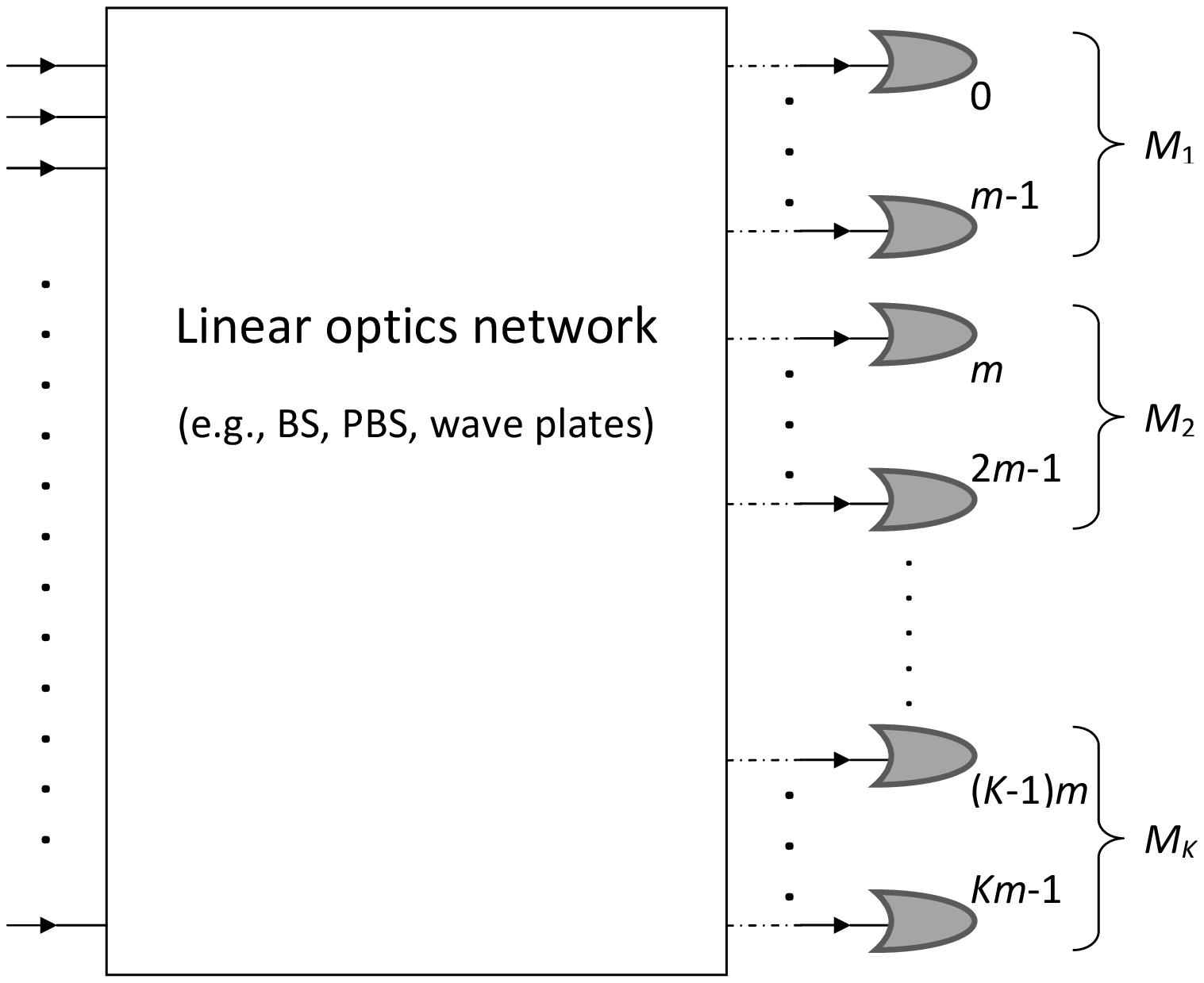}
\caption{
\label{fig-POVM}
Example implementation of a POVM based on linear optics.
In this example, the first $m$ detection events map to the first POVM element $M_1$, and the next $m$ detection events map to the second POVM element $M_2$, and so on.
}
\end{center}
\end{figure}

We are given a POVM $\mathcal M$ with elements $\{ M_i \in \IC^{n \times n}: i=1,\dots,K \}$ expressed in the basis $\{\ket{\overline{0}},\dots,\ket{\overline{n-1}}\}$, which, for example, may correspond to the input modes of beam splitters.
Note that $\sum_{i=1}^K M_i = I$ and $M_i$ is positive semidefinite.

An experiment implementing the POVM takes an input state in that basis and runs a quantum circuit to produce detection events corresponding to $\{M_i\}$.
We can label the detection events using another basis 
$\{\ket{0}, \dots, \ket{\KrausDim \ChannelDimO-1}\}$, which, for example, may correspond to the output modes of beam splitters.
Figure~\ref{fig-POVM} shows an example using linear optics to implement the POVM where each detection event corresponds to a detector click.
In the simplest case, the $\ChannelDimO$ detection events $\ket{(i-1) \ChannelDimO},\dots,\ket{i \ChannelDimO-1}$ map to $M_i$.
This corresponds to embedding the POVM in a unitary matrix $U$ in a larger space of dimension $\KrausDim \ChannelDimO$ and the projection onto detection event $\ket{j}$ indicates an outcome for $M_i$ according to the above mapping.
(We note that in reality, these projections need not be separately detected.)
This means that $U$ has to satisfy
\begin{align*}
\sum_{z=(i-1) \ChannelDimO}^{i \ChannelDimO-1} \bra{z} U \rho U^\dag \ket{z} &= \tr ( M_i \bar{\rho}) \hbox{ for all } i=1,\dots,K
\end{align*}
for any input state $\bar{\rho} \in \IC^{n \times n}$
and
\begin{align*}
\rho=
\begin{bmatrix}
\bar{\rho} & \bf{0}
\\
\bf{0} & \bf{0}
\end{bmatrix} 
\in \IC^{\KrausDim \ChannelDimO \times \KrausDim \ChannelDimO}
\end{align*}
is the input state in the larger space using basis $\{\ket{\bar{0}}, \dots, \ket{\overline{\KrausDim \ChannelDimO-1}}\}$.
Thus, $U$ is of the form 
\begin{align}
U=
\begin{bmatrix}
F_1 
& * & * & \cdots & *
\\
F_2 
& * & \ddots & & *
\\
\vdots & \vdots &&& \vdots
\\
\undermat{n}{F_{\KrausDim}}
& * & * & \cdots & *
\end{bmatrix}
\in \myUgrp(\KrausDim m)
\\ \nonumber
\end{align}
in which element $(i,j)$ corresponds to $\ket{i}\bra{\bar{j}}$,
and the Kraus operators are of the form
\begin{align}
\label{eqn-general-form-Kraus-operator}
F_i=V_i
\begin{bmatrix}
\sqrt{M_i} 
\\
\bf{0} 
\end{bmatrix} \in \IC^{m \times n},
\end{align}
where 
$V_i \in \myUgrp(m)$ that we may freely choose.
To maintain generality, we allow zeros to be padded in $F_i$.
In essence, the projections corresponding to the first $m$ rows of $U$ correspond to POVM outcome 1, and the next $m$ rows to POVM outcome 2, and so on.
These projections are the detection events when $U$ is directly implemented in an experiment and the order of them (i.e., the order of the rows of $U$) is meaningless.
In other words, we may arbitrarily label the projection outcomes $\ket{z}$.
So if $U$ describes an experiment implementing the POVM, 
$PU$ also describes the same experiment for some permutation matrix $P$.
Overall, we define the time-energy cost of POVM $\mathcal M$ by
\begin{equation}
\label{eqn-TE-resource-POVM-1}
\maxnorm{\mathcal M}
\equiv
\min_{P,\{V_i\}}
\maxnorm{
P
g(F_1,F_2,\dots,F_K)
}
\end{equation}
where $P$ is some $\KrausDim \ChannelDimO \times \KrausDim \ChannelDimO$ permutation matrix,
and $\maxnorm{Pg}$ is the solution to the partial $U$ problem~\eqref{eqn-problem-partial-with-g}.
As we shall see, the number of zeros padded in $F_i$ (i.e., $m-n$) does not matter.
In the following, we first investigate the special case where $P$ only swaps the POVM elements $\{F_i\}$, i.e., we restrict $P$ to be of the form $\hat{P} \otimes I_m$ where $\hat{P}$ is some $\KrausDim \times \KrausDim$ permutation matrix and $I_m$ is the $m$-dimensional identity matrix.
Then, using the result of this special case, we investigate the case with a general $P$.

\subsection{With arbitrary POVM element labelings}

We first focus on the problem without the optimization over 
$P$ and $\{V_i\}$ (assumed to be fixed),
and with a specific ordering of the POVM elements $(M_k)_{k=1}^K$:
\begin{align}
&
\maxnorm{(M_k)_{k=1}^K}
\nonumber
\\
\equiv&
\maxnorm{
g(F_1,F_2,\dots,F_K)
}
\nonumber
\\
=&
\maxnorm{ g(\hat{Q}F_1Q^\dag,F_2Q^\dag,\dots,F_{\KrausDim}Q^\dag) }
\text{for all } Q
\nonumber
\\
\ge&
\max_{1\leq i \leq n}
\cos^{-1}
\left[
\operatorname{Re}(
(\hat{Q} F_1 Q^\dag) (i,i)
)
\right]
\nonumber
\\
=&
\cos^{-1}
\left[
\min_{1\leq i \leq n}
\operatorname{Re}(
(\hat{Q} F_1 Q^\dag) (i,i)
)
\right]
\label{eqn-max-norm-POVM-lower-bound-no-perm-1}
\end{align}
where 
(i) the third line
is
due to Lemma~\ref{lemma-U-norm-unchanged-by-conjugation-2},
$Q \in U(n)$ and $\hat{Q}$ is of the form in Eq.~\eqref{eqn-U-norm-unchanged-by-conjugation-Qhat};
(ii)
the inequality in the fourth line is due to Eq.~\eqref{eqn-maxnorm-lower-bound-diagonal1}; and 
(iii) the last equality is because
$\cos^{-1}$ is a decreasing function in the range $[0,\pi]$.
Different $Q$ gives different bounds.
With an argument similar to that for
Eq.~\eqref{eqn-maxnorm-lower-bound-diagonal3},
we choose $Q$
to be the right singular matrix of 
$\sqrt{M_1}$
and this gives 
$
\min_i
\operatorname{Re}(
(\hat{Q} F_1 Q^\dag) (i,i)
) \le 
\sigma_\text{min} (F_1)
=
\sigma_\text{min} (\sqrt{M_1})
$ since every element of a unitary matrix (corresponding to the product of $\hat{Q}$, $V_1$, and the left singular matrix of $\sqrt{M_1}$) has a norm no larger than unity,
where $\sigma_\text{min}$ denotes the minimum singular value of its argument.
This shows that
\begin{align}
\maxnorm{(M_k)_{k=1}^K}
&\ge
\cos^{-1} 
\left[
\sigma_\text{min} (\sqrt{M_1}) 
\right].
\label{eqn-max-norm-POVM-lower-bound-no-perm-2}
\end{align}
Since this lower bound is independent of $\{V_i\}$, we have 
\begin{align}
\label{eqn-max-norm-POVM-lower-bound-no-perm-3}
\min_{\{V_i\}}
\maxnorm{
(M_k)_{k=1}^K
}
\ge
\cos^{-1} 
\left[
\sigma_\text{min} (\sqrt{M_1}) 
\right].
\end{align}
On the other hand, this bound can 
be made more stringent by choosing
$V_1$ 
so that 
the product of $\hat{Q}$, $V_1$, and the left singular matrix of $\sqrt{M_1}$ is the identity matrix. 

\medskip

Upper bound ---
We upper bound the above quantity
$
\min_{\{V_i\}}
\maxnorm{(M_k)_{k=1}^K}
$
by letting $V_1$ to be the unitary matrix that transforms the left singular matrix of $\sqrt{M_1}$ to become its right singular matrix.
Applying 
Eq.~\eqref{eqn-maxnorm-special-case2} gives
\begin{align}
\label{eqn-max-norm-POVM-upper-bound-no-perm-1}
\min_{\{V_i\}}
\maxnorm{
g(F_1,F_2,\dots,F_K)
}
\le
\cos^{-1} 
\left[
\sigma_\text{min} (\sqrt{M_1})
\right] .
\end{align}
It is an inequality because we chose one particular $V_1$.
Combining Eqs.~\eqref{eqn-max-norm-POVM-lower-bound-no-perm-3} and \eqref{eqn-max-norm-POVM-upper-bound-no-perm-1} gives
\begin{align}
\label{eqn-max-norm-POVM-exact-no-perm-1}
\min_{\{V_i\}}
\maxnorm{
g(F_1,F_2,\dots,F_K)
}
=
\cos^{-1} 
\left[
\sigma_\text{min} (\sqrt{M_1})
\right] .
\end{align}

We now consider the minimization over permutations.
For the special case that $P$ permutes only the POVM elements, we have the following.
\begin{theorem}
{\rm
\begin{align}
&\maxnorm{\{M_{k}\}_{k=1}^K} \nonumber
\\
\equiv&
\min_{{\pmb\pi}}
\min_{\{V_i\}}
\maxnorm{
g(F_{{\pmb\pi}(1)},F_{{\pmb\pi}(2)},\dots,F_{{\pmb\pi}(K)})
}
\nonumber
\\
=&
\min_{1\leq k \leq K}
\cos^{-1} 
\left[
\sigma_\text{min} (\sqrt{M_k}) 
\right] ,
\label{eqn-TE-POVM-permute-element-index}
\end{align}
where $\pmb\pi$ denotes the ordering function.
}
\end{theorem}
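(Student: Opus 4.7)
The plan is to reduce this theorem to the already-established exact formula~\eqref{eqn-max-norm-POVM-exact-no-perm-1}, exploiting the asymmetric role that $F_1$ plays in that result. The key observation is that equation~\eqref{eqn-max-norm-POVM-exact-no-perm-1} gives an exact value $\cos^{-1}[\sigma_{\min}(\sqrt{M_1})]$ that depends \emph{only} on the first Kraus operator block, and nothing on $M_2,\dots,M_K$. A permutation $\hat P\otimes I_m$ merely relabels which POVM element sits in the first block, so applying~\eqref{eqn-max-norm-POVM-exact-no-perm-1} in each permuted ordering will produce a value that depends only on whichever $M_k$ has been moved into the first slot; minimizing over permutations then picks out the most favourable $k$.

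Concretely, first I would observe that for any permutation $\pmb\pi$ of $\{1,\dots,K\}$,
\begin{equation*}
g(F_{\pmb\pi(1)},\dots,F_{\pmb\pi(K)})
\end{equation*}
is itself the $g$-matrix associated with the reordered Kraus sequence, and its Kraus operators still satisfy $F_{\pmb\pi(i)}=V_{\pmb\pi(i)}\bigl[\sqrt{M_{\pmb\pi(i)}};\mathbf 0\bigr]$ with freely chosen unitaries $V_{\pmb\pi(i)}$. Applying the exact identity~\eqref{eqn-max-norm-POVM-exact-no-perm-1} to this reordered sequence (with $M_{\pmb\pi(1)}$ playing the role previously played by $M_1$) gives
\begin{equation*}
\min_{\{V_i\}}\maxnorm{g(F_{\pmb\pi(1)},\dots,F_{\pmb\pi(K)})}=\cos^{-1}\!\left[\sigma_{\min}\!\bigl(\sqrt{M_{\pmb\pi(1)}}\bigr)\right].
\end{equation*}

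Second, I would minimize over $\pmb\pi$. Since the right-hand side depends on the permutation only through its first value $\pmb\pi(1)$, and since every index $k\in\{1,\dots,K\}$ can be placed in position $1$ by some permutation, the minimum over all permutations coincides with the minimum over $k$, yielding
\begin{equation*}
\min_{\pmb\pi}\min_{\{V_i\}}\maxnorm{g(F_{\pmb\pi(1)},\dots,F_{\pmb\pi(K)})}=\min_{1\leq k\leq K}\cos^{-1}\!\left[\sigma_{\min}\!\bigl(\sqrt{M_k}\bigr)\right].
\end{equation*}

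The main subtlety — though not really an obstacle given the previous section — is checking that~\eqref{eqn-max-norm-POVM-exact-no-perm-1} is genuinely symmetric under the relabeling induced by $\hat P\otimes I_m$. The lower-bound derivation used Lemma~\ref{lemma-U-norm-unchanged-by-conjugation-2}, which singles out the first block via $\hat Q F_1 Q^\dag$; one must verify that when $\hat P\otimes I_m$ sits in front of $g$, the resulting matrix admits the same block structure with the permuted list as its fixed first $n$ columns, so that both the lower-bound argument (via the $(i,i)$ diagonal entries and the choice $Q$ = right singular matrix of $\sqrt{M_{\pmb\pi(1)}}$) and the matching upper-bound choice of $V_{\pmb\pi(1)}$ carry over verbatim. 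Once that is in hand, the theorem reduces to a one-line optimization over $k$.
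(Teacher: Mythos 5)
Your proposal is correct and follows essentially the same route as the paper: the paper likewise obtains the theorem as an immediate consequence of the fixed-ordering exact formula~\eqref{eqn-max-norm-POVM-exact-no-perm-1}, whose value depends only on the element occupying the first block, so that minimizing over permutations reduces to minimizing over which $M_k$ is placed first. Your extra check that the block structure and the lower/upper-bound arguments carry over verbatim under the relabeling is exactly the (unstated) justification the paper relies on.
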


\subsection{With arbitrary detection event labelings}

We now consider general permutations over all detection events of all POVM elements and bound 
$\maxnorm{\mathcal M}$
in Eq.~\eqref{eqn-TE-resource-POVM-1}.
Essentially, 
the permutation $P$ in $Pg(F_1,F_2,\dots,F_K)$ serves to produce a new top-left $n \times n$ block which we denote as $\tilde{F}$.
We may reuse Eqs.~\eqref{eqn-max-norm-POVM-lower-bound-no-perm-1} and \eqref{eqn-max-norm-POVM-lower-bound-no-perm-2} 
with this $\tilde{F}$ in place of $F_1$.
Depending on how we choose $Q$ in Eq.~\eqref{eqn-max-norm-POVM-lower-bound-no-perm-1}, we have two methods to lower bound
$\maxnorm{\mathcal M}$.
In general, we may take the maximum of two bounds of the two methods [cf. Eqs.~\eqref{eqn-general-permutation-method-Q-identity-1}, \eqref{eqn-column-Mj-bound}, \eqref{eqn-TE-POVM-general-lower-bound}, and \eqref{eqn-TE-POVM-general-lower-bound2}].

Later, we will apply Method~1 in the examples in Sec.~\ref{sec-example-bell-state} and Method~2 in the examples in Secs.~\ref{sec-example-linear-optics} and \ref{sec-example-usd}.

\subsubsection{Method 1}
Let us consider the first way to bound
$\maxnorm{\mathcal M}$
in Eq.~\eqref{eqn-TE-resource-POVM-1}.
Starting from
Eq.~\eqref{eqn-max-norm-POVM-lower-bound-no-perm-1} with $Q$ being the identity matrix, we have
\begin{align*}
\maxnorm{\mathcal M}
&=
\min_{P,\{V_j\}}
\maxnorm{
P
g(F_1,F_2,\dots,F_K)
}
\\
&\ge
\cos^{-1}
\Big[
\max_{P,\{V_j\}}
\min_{1\leq i \leq n}
\operatorname{Re}(
\tilde{F}  (i,i)
)
\Big]
\equiv \cos^{-1} A
\end{align*}
where 
we used the fact that
$\cos^{-1}$ is a decreasing function in the range $[0,\pi]$.
Using the max-min inequality (see, e.g., Ref.~\cite{Boyd:2004}),
\begin{align*}
A
&\le
\min_{1\leq i \leq n}
\max_{P,\{V_j\}}
\operatorname{Re}(
\tilde{F}  (i,i)
)
\\
&=
\min_{1\leq i \leq n}
\max_{j}
\lVert \sqrt{M_j}(*,i) \rVert_2
\end{align*}
where the term on the RHS of the second line is the $\ell_2$-norm of the $i$th column of $\sqrt{M_j}$.
The second line is due that 
whenever we choose through $P$ the $i$th row of $\tilde{F}$ to be the $l$th row of the $j$th POVM element $F_j=V_j \sqrt{M_j}$, we can always maximize this $l$th row's $i$th column element by choosing the best rotation $V_j$.
The best rotation concentrates all elements of the $i$th column of $\sqrt{M_j}$ to the $l$th row.
This gives one way to lower bound $\maxnorm{\mathcal M}$:
\begin{theorem}
{\rm
\begin{align}
\maxnorm{\mathcal M}
\ge
\cos^{-1}
\left[
\min_{1\leq i \leq n}
\max_{1\le j \le K}
\lVert \sqrt{M_j}(*,i) \rVert_2 .
\right] .
\label{eqn-general-permutation-method-Q-identity-1}
\end{align}
}
\end{theorem}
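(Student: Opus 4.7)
My plan is to take the definition~\eqref{eqn-TE-resource-POVM-1} and argue that every choice of permutation $P$ and rotations $\{V_j\}$ produces some top-left $n\times n$ block $\tilde F$ of $P g(F_1,\dots,F_K)$, whose rows are $n$ of the $Km$ rows of the stacked matrix $g$. Applying the lower bound from the $Q=I$ case of Eq.~\eqref{eqn-max-norm-POVM-lower-bound-no-perm-1} to each such choice yields
\begin{align*}
\maxnorm{\mathcal M}
&\ge \cos^{-1}\!\left[\max_{P,\{V_j\}} \min_{1\le i\le n} \operatorname{Re}(\tilde F(i,i))\right],
\end{align*}
because $\cos^{-1}$ is decreasing on $[0,\pi]$ and the minimum over all admissible $U$ in~\eqref{eqn-TE-resource-POVM-1} is at least the supremum of these per-choice bounds.

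Next I would apply the standard max-min inequality to obtain
\begin{align*}
\max_{P,\{V_j\}} \min_{1\le i\le n} \operatorname{Re}(\tilde F(i,i))
\le \min_{1\le i\le n} \max_{P,\{V_j\}} \operatorname{Re}(\tilde F(i,i)),
\end{align*}
which lets me optimize each diagonal entry $\tilde F(i,i)$ independently. For a fixed $i$, the permutation $P$ gets to select which row of which $F_j = V_j \sqrt{M_j}$ populates row $i$ of $\tilde F$, and $V_j$ is at our disposal. Writing that chosen row as $V_j(l,*)\sqrt{M_j}$, its $i$th entry equals $V_j(l,*)\,\sqrt{M_j}(*,i)$, whose modulus is at most $\lVert V_j(l,*)\rVert_2\,\lVert\sqrt{M_j}(*,i)\rVert_2 = \lVert\sqrt{M_j}(*,i)\rVert_2$ by Cauchy--Schwarz and unitarity. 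Equality (with a real, nonnegative value) is attained by choosing $V_j$ so that row $l$ equals $\sqrt{M_j}(*,i)^\dagger/\lVert\sqrt{M_j}(*,i)\rVert_2$, which is always completable to a unitary. Hence $\max_{P,\{V_j\}} \operatorname{Re}(\tilde F(i,i)) = \max_{1\le j\le K} \lVert\sqrt{M_j}(*,i)\rVert_2$, since we are still free to select which POVM element to draw from.

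Stitching these pieces together gives exactly the claimed bound. The step I expect to need the most care is the independent-optimization justification: after invoking max-min I must make sure that the column-concentration $V_j$ used to realize $\lVert\sqrt{M_j}(*,i)\rVert_2$ is genuinely free, i.e.\ that no constraint tying different rows together is silently imposed by $P$ being a permutation (which would bite if we were trying to prove equality, but is harmless for the upper bound on $\max\min$). Once this is clear, no other subtleties remain and the inequality follows by routine bookkeeping.
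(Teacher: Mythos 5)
Your proposal is correct and follows essentially the same route as the paper: start from the $Q=I$ case of Eq.~\eqref{eqn-max-norm-POVM-lower-bound-no-perm-1}, apply the max--min inequality, and then bound each diagonal entry $\tilde F(i,i)$ by the column norm $\lVert\sqrt{M_j}(*,i)\rVert_2$ via the freedom in $V_j$ (the paper phrases your Cauchy--Schwarz step as the rotation ``concentrating'' the $i$th column of $\sqrt{M_j}$ into the selected row). Your closing remark is also on point: only the upper-bound direction of that per-entry optimization is needed here, so the independence concern is harmless.
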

This lower bound is easy to compute, by first obtaining the norm of every column of all $\sqrt{M_j}$ and then comparing them.
\begin{corollary}
\label{cor-column-Mj-bound}
{\rm
If there is a $\sqrt{M_j}$ having a column with norm $c \ge 1/\sqrt{2}$,
\begin{align}
\maxnorm{\mathcal M}
\ge
\cos^{-1} (c) .
\label{eqn-column-Mj-bound}
\end{align}
}
\end{corollary}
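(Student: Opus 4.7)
The plan is to derive the corollary directly from Theorem~\ref{eqn-general-permutation-method-Q-identity-1} by showing that the hypothesis $c \ge 1/\sqrt{2}$ is enough to force
\[
\min_{1\le i \le n} \max_{1\le j \le K} \lVert \sqrt{M_j}(*,i) \rVert_2 \le c,
\]
after which monotonicity of $\cos^{-1}$ gives the claim.

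First I would exploit the POVM completeness relation. For each column index $i$, the squared $\ell_2$-norm of the $i$th column of $\sqrt{M_j}$ is exactly the diagonal entry $(M_j)_{ii}$, since $\sqrt{M_j}^\dagger \sqrt{M_j} = M_j$. Summing over $j$ and using $\sum_j M_j = I$ yields the key identity
\[
\sum_{j=1}^K \lVert \sqrt{M_j}(*,i) \rVert_2^2 = 1 \qquad \text{for every } i.
\]

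Next, let $j_0,i_0$ be indices with $\lVert \sqrt{M_{j_0}}(*,i_0) \rVert_2 = c \ge 1/\sqrt{2}$. Then, for any $j \neq j_0$,
\[
\lVert \sqrt{M_j}(*,i_0) \rVert_2^2 \le 1 - c^2 \le \tfrac12 \le c^2,
\]
so no other POVM element can produce a column in position $i_0$ with norm exceeding $c$. Consequently $\max_j \lVert \sqrt{M_j}(*,i_0) \rVert_2 = c$, and taking the minimum over $i$ gives $\min_i \max_j \lVert \sqrt{M_j}(*,i) \rVert_2 \le c$. Substituting into Theorem~\ref{eqn-general-permutation-method-Q-identity-1} and using that $\cos^{-1}$ is decreasing on $[0,1]$ completes the argument.

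There is no real obstacle: the only subtlety is recognizing that the threshold $1/\sqrt{2}$ in the hypothesis is precisely what is required to dominate the other column norms via the completeness relation, which is why the bound cannot be phrased directly for arbitrary $c$ without some further argument.
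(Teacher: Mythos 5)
Your proposal is correct and follows essentially the same route as the paper: both proofs invoke the completeness relation $\sum_{j}\lVert\sqrt{M_j}(*,i)\rVert_2^2=1$ to conclude that the column of norm $c\ge 1/\sqrt{2}$ dominates all other column norms at that index, and then drop the minimization over $i$ since any fixed $i$ yields a valid (weaker) lower bound. Your write-up merely makes explicit the inequality $1-c^2\le\tfrac12\le c^2$ that the paper leaves implicit.
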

\begin{proof}
For any POVM, the trace-preserving constraint implies that
$\sum_{j=1}^K \lVert \sqrt{M_j}(*,i) \rVert_2^2 =1$.
Thus, 
$\max_{j} \lVert \sqrt{M_j}(*,i) \rVert_2 =c$.
Finally, we can neglect the minimization over $i$ since every $i$ serves as a lower bound.
\end{proof}

\subsubsection{Method 2}

Let us consider the second way to bound
$\maxnorm{\mathcal M}$
in Eq.~\eqref{eqn-TE-resource-POVM-1}.
We start from
Eqs.~\eqref{eqn-max-norm-POVM-lower-bound-no-perm-1} and \eqref{eqn-max-norm-POVM-lower-bound-no-perm-2} 
with $\tilde{F}$ in place of $F_1$.
Note that the upper bound in Eq.~\eqref{eqn-max-norm-POVM-upper-bound-no-perm-1} does not apply here since we now do not have the unitary degree of freedom on the left (i.e., $V_1$) to make the top-left $n \times n$ block of $Pg(F_1,F_2,\dots,F_K)$ Hermitian.
The $n$ rows of $\tilde{F}$ are constructed by selecting rows coming from any Kraus operators ${F_i}$ of Eq.~\eqref{eqn-general-form-Kraus-operator}, $i=1,\dots,K$ (not necessarily from the same element).
Thus we have the following.
\begin{theorem}
\label{thm-maxnorm-lower-bound-all-permutations}
{\rm
\begin{align}
\maxnorm{\mathcal M}
\ge
\min_{P,\{V_i\}}
\cos^{-1} 
\left[
\sigma_\text{min}(\tilde{F}) 
\right]
\label{eqn-TE-POVM-general-lower-bound}
\end{align}
where $P$ denotes the selection of the rows of $\tilde{F}$ coming from any 
Kraus operators
${F_i}$ of Eq.~\eqref{eqn-general-form-Kraus-operator}, $i=1,\dots,K$.
}
\end{theorem}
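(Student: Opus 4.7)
The plan is to adapt the argument that produced Eqs.~\eqref{eqn-max-norm-POVM-lower-bound-no-perm-1}--\eqref{eqn-max-norm-POVM-lower-bound-no-perm-2}, but with $\tilde F$ (the top-left $n\times n$ block of $Pg(F_1,\ldots,F_K)$) playing the role previously held by $F_1$. The preparatory observation I would make is that because $P$ is unitary, $Pg(F_1,\ldots,F_K)$ still has orthonormal columns; partitioning its $K m$ rows into $K$ consecutive blocks of $m$ rows, one may therefore write $Pg(F_1,\ldots,F_K)=g(F_1',\ldots,F_K')$ for a new sequence of operators $\{F_i'\}$ whose leading block $F_1'$ has top $n$ rows equal to $\tilde F$. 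In particular Lemma~\ref{lemma-U-norm-unchanged-by-conjugation-2} applies verbatim to this relabeled object.

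Fixing $P$, $\{V_i\}$, and any $Q\in\myUgrp(n)$, the lemma gives $\maxnorm{Pg(F_1,\ldots,F_K)}=\maxnorm{g(\hat Q F_1' Q^\dag, F_2' Q^\dag,\ldots,F_K' Q^\dag)}$. The block form of $\hat Q$ in Eq.~\eqref{eqn-U-norm-unchanged-by-conjugation-Qhat} makes the top-left $n\times n$ corner of the right-hand matrix equal to $Q\tilde F Q^\dag$. Invoking the diagonal lower bound Eq.~\eqref{eqn-maxnorm-lower-bound-diagonal1} together with the monotonicity of $\cos^{-1}$ then yields
\begin{equation*}
\maxnorm{Pg(F_1,\ldots,F_K)}\ \ge\ \cos^{-1}\!\Big[\min_{1\le i\le n}\operatorname{Re}\bigl((Q\tilde F Q^\dag)(i,i)\bigr)\Big].
\end{equation*}

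The remaining freedom is the choice of $Q$. Writing the SVD $\tilde F=A\Sigma B^\dag$ and picking $Q=B^\dag$ collapses the corner to $Q\tilde F Q^\dag=B^\dag A\,\Sigma$, whose $i$-th diagonal entry is $(B^\dag A)(i,i)\,\sigma_i$, with $\sigma_i$ the $i$-th singular value of $\tilde F$. Since $B^\dag A$ is unitary, each $|(B^\dag A)(i,i)|\le 1$; evaluating the minimum at the index of the smallest singular value therefore bounds $\min_i\operatorname{Re}((Q\tilde F Q^\dag)(i,i))\le \sigma_{\text{min}}(\tilde F)$. Monotonicity of $\cos^{-1}$ upgrades this to $\maxnorm{Pg(F_1,\ldots,F_K)}\ge \cos^{-1}[\sigma_{\text{min}}(\tilde F)]$, and minimizing both sides over $P$ and $\{V_i\}$ delivers Eq.~\eqref{eqn-TE-POVM-general-lower-bound}.

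The main subtlety, already flagged in the paragraph preceding the theorem, is that the matching upper-bound construction of Eq.~\eqref{eqn-max-norm-POVM-upper-bound-no-perm-1} does not carry over: once $\tilde F$ is assembled from rows of several different $F_i$'s, no single $V_1$-type unitary acts on its left and one cannot force $\tilde F$ to be Hermitian, so the method only produces a lower bound rather than an exact formula. The only bookkeeping point worth double-checking is that the $Q$ used inside the lemma is chosen after $P$ and $\{V_i\}$ are fixed and does not itself alter $\tilde F$, so the derived bound passes cleanly through the outer minimization in Eq.~\eqref{eqn-TE-resource-POVM-1}.
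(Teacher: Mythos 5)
Your argument is correct and follows the paper's own route: the paper likewise obtains the theorem by rerunning the chain of Eqs.~\eqref{eqn-max-norm-POVM-lower-bound-no-perm-1}--\eqref{eqn-max-norm-POVM-lower-bound-no-perm-2} with the permuted top-left block $\tilde F$ in place of $F_1$, choosing $Q$ from the singular value decomposition of $\tilde F$, and then minimizing over $P$ and $\{V_i\}$. You have merely made explicit the re-partitioning of $Pg(F_1,\dots,F_K)$ and the unit-modulus bound on the diagonal of $B^\dag A$, details the paper leaves implicit, and your closing remark about the failure of the Hermitian upper-bound construction matches the paper's own caveat.
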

In general we need to iterate over all permutations of the rows to find the best $\tilde{F}$ to achieve the minimum on the RHS.
Also, this lower bound may not be tight.
On the other hand, we may bound $\sigma_\text{min}(\tilde{F})$ as follows.
First, it is no larger than the norm of any row $j$ of $\tilde{F}$:
\begin{align}
\tilde{F}(j,*) \tilde{F}(j,*)^\dag&=
[W_L(j,*) S W_R^\dag]
[W_R S^\dag W_L(j,*)^\dag]
\nonumber
\\
&=
\sum_{i=1}^n |W_L(j,i)|^2 \sigma_i^2(\tilde{F})
\nonumber
\\
&\ge \sigma_\text{min}^2(\tilde{F})  \hspace{.7cm} \text{for }\: 1 \le j \le n
\label{eqn-bound-of-sigma-F}
\end{align}
where we take the singular value decomposition $\tilde{F}=W_L S
W_R^\dag$
and $\sigma_i(\tilde{F}), i=1,\dots,n$ are the diagonal elements of $S$.
Second, $\sigma_\text{min}(\tilde{F})$ is no larger than the minimum singular value of 
any subset of rows of $\tilde{F}$.
This follows by simply multiplying the left singular matrix of this submatrix to the left of $\tilde{F}$ and applying the above result to this new $\tilde{F}$~\footnote
{For example, suppose that the subset of rows comes from the first two rows of $\tilde{F}$ and $R$ is the $2\times 2$ left singular matrix of it.
Then, let $\tilde{F}'=\begin{bmatrix}R^\dag&0\\0&{\mathbf 1}\end{bmatrix} \tilde{F}$ and apply Eq.~\eqref{eqn-bound-of-sigma-F} to $\tilde{F}'$.
Note that $\tilde{F}'$ and $\tilde{F}$ have the same singular values.}.
Thus, 
we construct $\tilde{F}$ by taking rows from $\{F_i\}$ with as large singular values as possible which 
can be done by choosing $V_i$ to cancel out the left singular matrix of $\sqrt{M_i}$.
Therefore, a strategy to find a lower bound of 
$\maxnorm{\mathcal M}$
in Eq.~\eqref{eqn-TE-resource-POVM-1} is 
the following.
\begin{lemma}
{\rm
Order all singular values of all $\sqrt{M_i}$, $i=1,\dots,K$, and obtain the $n$th largest singular value $\sigma_n$.
Then,
\begin{align}
\maxnorm{\mathcal M}
\ge
\cos^{-1} (\sigma_n) .
\label{eqn-TE-POVM-general-lower-bound2}
\end{align}
}
\end{lemma}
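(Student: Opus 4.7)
The plan is as follows. By Theorem~\ref{thm-maxnorm-lower-bound-all-permutations} and the monotonicity of $\cos^{-1}$ on $[0,1]$, it suffices to prove the pointwise inequality $\sigma_{\min}(\tilde F)\le \sigma_n$ for every admissible row-selection $P$ and every choice of $\{V_i\}$. So fix such a choice, let $n_i\ge 0$ be the number of rows of $\tilde F$ drawn from $F_i$ (so $\sum_{i=1}^K n_i=n$), and let $\tilde F_i$ denote the corresponding $n_i\times n$ row-submatrix of $F_i$.

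I would then split the argument into two applications of Cauchy interlacing. First, $\tilde F_i\tilde F_i^\dag$ is a principal submatrix of $\tilde F\tilde F^\dag$, so Cauchy interlacing gives $\sigma_{\min}(\tilde F)\le \sigma_{\min}(\tilde F_i)$ whenever $n_i\ge 1$; this is exactly the ``subset of rows'' observation used in the footnote preceding the lemma. Second, because $F_i=V_i\bigl[\sqrt{M_i};\,\mathbf{0}\bigr]$ has the same singular values as $\sqrt{M_i}$ and $\tilde F_i\tilde F_i^\dag$ is also a principal submatrix of $F_iF_i^\dag$, a second interlacing step gives $\sigma_{\min}(\tilde F_i)\le \sigma_{n_i}(\sqrt{M_i})$, where $\sigma_k(\cdot)$ denotes the $k$th largest singular value. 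Chaining the two bounds yields $\sigma_{\min}(\tilde F)\le \min_{i:\,n_i\ge 1}\sigma_{n_i}(\sqrt{M_i})$; the second step also confirms the parenthetical remark that aligning $V_i$ with the left singular matrix of $\sqrt{M_i}$ saturates the row-by-row bounds for each block.

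The final step is a one-line pigeonhole count. Suppose, for contradiction, that $\sigma_{n_i}(\sqrt{M_i})>\sigma_n$ for every $i$ with $n_i\ge 1$. Then for each such $i$ the top $n_i$ singular values of $\sqrt{M_i}$ all strictly exceed $\sigma_n$, so the combined multiset of singular values of $\sqrt{M_1},\ldots,\sqrt{M_K}$ contains at least $\sum_i n_i = n$ entries strictly greater than $\sigma_n$. But $\sigma_n$ is by construction the $n$th largest entry of this multiset, so at most $n-1$ entries can strictly exceed it, a contradiction. Hence $\sigma_{\min}(\tilde F)\le\sigma_n$, and feeding this into Theorem~\ref{thm-maxnorm-lower-bound-all-permutations} gives $\maxnorm{\mathcal M}\ge \cos^{-1}(\sigma_n)$.

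The only slightly delicate point I anticipate is keeping the two Cauchy interlacing steps correctly oriented: once between $\tilde F$ and its row-subset $\tilde F_i$ (giving an upper bound on $\sigma_{\min}(\tilde F)$), and once between the full $F_i$ and its row-subset $\tilde F_i$ (giving an upper bound on $\sigma_{\min}(\tilde F_i)$ in terms of $\sigma_{n_i}(\sqrt{M_i})$). Everything else, including dispensing with the optimization over $\{V_i\}$ and $P$, is essentially bookkeeping.
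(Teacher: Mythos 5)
Your proof is correct and follows essentially the same route as the paper: reduce via Theorem~\ref{thm-maxnorm-lower-bound-all-permutations} to showing $\sigma_\text{min}(\tilde F)\le\sigma_n$ for every admissible $P$ and $\{V_i\}$, dominate $\sigma_\text{min}(\tilde F)$ by the minimum singular value of each row subset, and then count. The only difference is one of presentation: the paper establishes the row-subset monotonicity by hand [the SVD row-norm bound of Eq.~\eqref{eqn-bound-of-sigma-F} plus the left-unitary trick in the footnote] and leaves the final counting implicit, whereas you invoke Cauchy interlacing twice and spell out the pigeonhole step, so your version is actually more complete than the paper's own sketch.
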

We remark that we do not take into account the amounts of overlaps between the rows of $\tilde{F}$ when we select them and thus this lower bound can be loose in some cases [i.e., the RHS of Eq.~\eqref{eqn-TE-POVM-general-lower-bound2} is lower than that of Eq.\eqref{eqn-TE-POVM-general-lower-bound}].
As an extreme example, two rows of $\tilde{F}$ come from different $\sqrt{M_i}$ and $\sqrt{M_j}$ such that the one row is a scalar multiple of each other.
This makes the smallest singular value of $\tilde{F}$ zero instead of $\sigma_n$.
In general, we need to go through all permutations in Eq.~\eqref{eqn-TE-POVM-general-lower-bound} to obtain good lower bounds.

We consider optimality for special cases.
\begin{lemma}
{\rm
If an $\tilde{F}$ can be found such that 
the RHS of Eq.~\eqref{eqn-TE-POVM-general-lower-bound2} is equal to that of Eq.~\eqref{eqn-TE-POVM-general-lower-bound} [i.e., $\sigma_\text{min}(\tilde{F})=\sigma_n$], such an $\tilde{F}$ is the minimizing $\tilde{F}$ for Eq.~\eqref{eqn-TE-POVM-general-lower-bound}.
}
\end{lemma}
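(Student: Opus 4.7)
The plan is to prove the lemma by a direct sandwich argument, invoking only the ceiling on $\sigma_{\min}(\tilde{F})$ that was already established during the derivation of Eq.~\eqref{eqn-TE-POVM-general-lower-bound2}. Since $\cos^{-1}$ is strictly decreasing on $[0,\pi]$, the inner minimization on the RHS of Eq.~\eqref{eqn-TE-POVM-general-lower-bound} is equivalent to the maximization
\[
\max_{P,\{V_i\}} \sigma_{\min}(\tilde{F})
\]
over admissible $\tilde{F}$; it therefore suffices to show that the hypothesized $\tilde{F}$ maximizes $\sigma_{\min}$.

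The ceiling $\sigma_{\min}(\tilde{F}) \le \sigma_n$ for every admissible $\tilde{F}$ is precisely what the paragraph preceding Eq.~\eqref{eqn-TE-POVM-general-lower-bound2} already supplies: the row-norm bound in Eq.~\eqref{eqn-bound-of-sigma-F}, combined with the row-subset interlacing remark, forces $\sigma_{\min}(\tilde{F})$ to be no larger than the $n$-th largest entry in the pooled list of singular values of the $\sqrt{M_i}$'s, namely $\sigma_n$. I would cite this chain rather than rederive it, and then immediately observe that under the hypothesis $\sigma_{\min}(\tilde{F}) = \sigma_n$ the ceiling is saturated. Consequently this particular $\tilde{F}$ is a maximizer of $\sigma_{\min}$, equivalently a minimizer of $\cos^{-1}[\sigma_{\min}(\tilde{F})]$, which is the claim of the lemma. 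A one-line restatement in the $\cos^{-1}$ form,
\[
\cos^{-1}\!\bigl[\sigma_{\min}(\tilde{F})\bigr] \;\ge\; \cos^{-1}(\sigma_n)
\quad \text{for every admissible } (P,\{V_i\}),
\]
with equality at the hypothesized $\tilde{F}$, closes the argument.

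There is no substantive obstacle here: the lemma is the matching \emph{sharpness} statement of two lower bounds that have already been proved, so the argument reduces to monotonicity of $\cos^{-1}$ plus an appeal to an inequality already in hand. The only care required is bookkeeping, namely verifying that the variables $(P,\{V_i\})$ over which the minimum in Eq.~\eqref{eqn-TE-POVM-general-lower-bound} runs are exactly those parameterizing the family of admissible $\tilde{F}$'s used in the ceiling argument; no new estimate, no new optimization, and no construction of a concrete witness are needed, since existence of the witness is the hypothesis.
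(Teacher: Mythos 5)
Your proof is correct and is essentially the argument the paper intends: the derivation preceding Eq.~\eqref{eqn-TE-POVM-general-lower-bound2} already establishes the universal ceiling $\sigma_\text{min}(\tilde{F})\le\sigma_n$ over all admissible $(P,\{V_i\})$, so by monotonicity of $\cos^{-1}$ any $\tilde{F}$ saturating it attains the minimum in Eq.~\eqref{eqn-TE-POVM-general-lower-bound}. The paper states the lemma without an explicit proof, treating it as an immediate consequence of exactly this sandwich, so your write-up fills in no more and no less than what was intended.
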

Furthermore, if the minimizing $\tilde{F}$ in Eq.~\eqref{eqn-TE-POVM-general-lower-bound} is Hermitian, 
we upper bound $\maxnorm{\mathcal M}$
in Eq.~\eqref{eqn-TE-resource-POVM-1} 
by using Eq.~\eqref{eqn-maxnorm-special-case2} [similar to the argument for Eq.~\eqref{eqn-max-norm-POVM-upper-bound-no-perm-1}]:
$$
\maxnorm{\mathcal M}
\le
\cos^{-1} 
\left[
\sigma_\text{min}(\tilde{F})
\right] .
$$
Combining this with Eq.~\eqref{eqn-TE-POVM-general-lower-bound} gives the following.
\begin{theorem}
\label{thm-TE-POVM-general-special-case}
{\rm
If the minimizing $\tilde{F}$ for Eq.~\eqref{eqn-TE-POVM-general-lower-bound} is Hermitian,
\begin{align}
\label{eqn-TE-POVM-general-special-case1}
\maxnorm{\mathcal M}
=
\cos^{-1}
\left[
\sigma_\text{min}(\tilde{F}) 
\right] .
\end{align}
}
\end{theorem}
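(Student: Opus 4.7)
The lower bound $\maxnorm{\mathcal M}\geq\cos^{-1}[\sigma_\text{min}(\tilde F)]$ is already delivered by Theorem~\ref{thm-maxnorm-lower-bound-all-permutations}: the minimization in \eqref{eqn-TE-POVM-general-lower-bound} is attained by a specific Hermitian $\tilde F$, and this value bounds $\maxnorm{\mathcal M}$ from below. So the entire task is to produce a matching upper bound, i.e., to exhibit a concrete feasible unitary extension $U$ in \eqref{eqn-TE-resource-POVM-1} whose max-norm equals $\cos^{-1}[\sigma_\text{min}(\tilde F)]$.

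My plan is to unwind the realizability that is already built into the minimization in \eqref{eqn-TE-POVM-general-lower-bound}. By hypothesis, there exist a permutation $P^\star$ and unitaries $\{V_i^\star\}$ such that the top-left $n\times n$ block of $P^\star g(F_1,\dots,F_K)$, with $F_i = V_i^\star[\sqrt{M_i}^T,\, \mathbf 0]^T$, is precisely the Hermitian $\tilde F$. Substituting this particular $P^\star$ and $\{V_i^\star\}$ into \eqref{eqn-TE-resource-POVM-1} upper-bounds $\maxnorm{\mathcal M}$ by the value of the partial $U$ problem \eqref{eqn-problem-partial-with-g} for a fixed first-$n$-column submatrix whose leading $n\times n$ block is Hermitian.

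I then apply the special-case formula \eqref{eqn-maxnorm-special-case2} from Appendix~\ref{app-summary}, which gives the \emph{exact} value of the partial $U$ problem as $\cos^{-1}[\sigma_\text{min}(\tilde F)]$ precisely in this Hermitian case. (This is the same device already used to obtain the upper bound \eqref{eqn-max-norm-POVM-upper-bound-no-perm-1} in the no-permutation subsection; here the Hermitian block $\tilde F$ plays the role that $V_1\sqrt{M_1}$ played there.) Chaining the two inequalities yields $\maxnorm{\mathcal M}\leq\cos^{-1}[\sigma_\text{min}(\tilde F)]$, and combining with the lower bound gives \eqref{eqn-TE-POVM-general-special-case1}.

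The main subtlety to watch for is the interplay between the outer minimizations over $P$ and $\{V_i\}$ in \eqref{eqn-TE-resource-POVM-1} and the appeal to \eqref{eqn-maxnorm-special-case2}: one must be sure that the minimizing Hermitian $\tilde F$ is not merely an abstract selection of rows from various $\sqrt{M_i}$'s, but is actually the physical top-left block of $P^\star g(F_1,\dots,F_K)$ for genuine choices of a row-permutation and Kraus representatives. This is automatic because the minimization \eqref{eqn-TE-POVM-general-lower-bound} ranges over exactly the same $P$ and $\{V_i\}$ appearing in \eqref{eqn-TE-resource-POVM-1}; no additional feasibility needs to be checked. Aside from this bookkeeping, the proof is a direct combination of the previously established lower bound with the partial-$U$ upper bound from the appendix.
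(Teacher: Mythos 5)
Your proposal is correct and follows essentially the same route as the paper: the lower bound is taken from Theorem~\ref{thm-maxnorm-lower-bound-all-permutations}, and the matching upper bound is obtained by fixing the permutation and the $\{V_i\}$ that realize the Hermitian $\tilde{F}$ and invoking the exact partial-$U$ formula \eqref{eqn-maxnorm-special-case2}, exactly as the paper does ``similar to the argument for Eq.~\eqref{eqn-max-norm-POVM-upper-bound-no-perm-1}.'' Your extra feasibility remark is sound bookkeeping that the paper leaves implicit.
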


\section{Examples}
\label{sec-examples}

We compute the time-energy costs for a few quantum measurements and also compare them with the costs of some actual experiments based on the linear optical components used.
We do not consider the detectors in all time-energy cost calculations below.

\subsection{Time-energy cost for $\myUgrp(2)$}
The most general unitary operator in $\myUgrp(2)$ can be implemented by a beam splitter (BS) 
with the freedom to choose the reflectivity and phase as follows~\cite{Zeilinger:1994:two-particle}:
\begin{align}
U_\text{BS}=
\exp (i \chi)
\begin{bmatrix}
r & i t^*\\
i t & r^* 
\end{bmatrix}
\label{eqn-U-BS}
\end{align}
where $\chi$ is an arbitrary real number, and $r$ and $t$ are the reflection and transmission amplitudes (complex) with $|r|^2+|t|^2=1$.
We seek the most efficient $U_\text{BS}$ for a fixed reflectivity $|r|$ based on 
$\maxnorm{U_\text{BS}}$.
The eigenvalues of $U_\text{BS}$ are
$\exp(i \chi) \left[ \text{Re}(r) \pm i \sqrt{ |t|^2+\text{Re}^2(r)} \right]$.
It can be easily seen that the best parameters are $\chi=0$ and $r=|r|$, giving
\begin{equation}
\label{eqn-maxnorm-U-BS}
\maxnorm{U_\text{BS}}=\cos^{-1} |r|.
\end{equation}

\begin{figure}[t!]
\begin{center}
\includegraphics[width=.5\columnwidth]{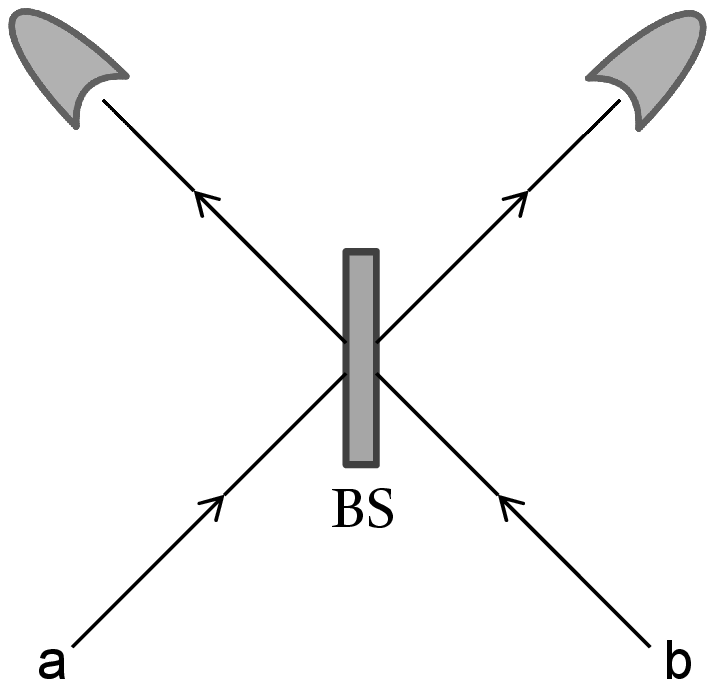}
\caption{
\label{fig-Bell1}
Bell measurement for $\ket{\Psi^-}$.
}
\end{center}
\end{figure}

\subsection{Time-energy cost for Bell state analysis}
\label{sec-example-bell-state}
\subsubsection{One Bell state}
A 50-50 beam splitter can be used to project the two-photon input state onto the singlet Bell state~\cite{Braunstein:1995:Bell,Lutkenhaus:1999:Bell} (see Fig.~\ref{fig-Bell1}).
The four Bell states are
\begin{align*}
\ket{\Psi^\pm} &= ( \ket{\updownarrow}_a \ket{\leftrightarrow}_b \pm \ket{\leftrightarrow}_a \ket{\updownarrow}_b )  /\sqrt{2}
\\
\ket{\Phi^\pm} &= ( \ket{\updownarrow}_a \ket{\updownarrow}_b \pm \ket{\leftrightarrow}_a \ket{\leftrightarrow}_b )  /\sqrt{2}
\end{align*}
where two photons are in modes $a$ and $b$, and $\ket{\updownarrow}$ and $\ket{\leftrightarrow}$ are single-photon states with vertical and horizontal polarizations.
Two detectors are installed at the two output ports of the BS, and when both report a click, the input state is collapsed to the singlet state $\ket{\Psi^-}$.
This simple setup cannot make projections onto the other three Bell states which is possible with more complicated setups~\cite{Braunstein:1995:Bell,Lutkenhaus:1999:Bell}.
Based on the previous analysis resulting in Eq.~\eqref{eqn-maxnorm-U-BS}, the time-energy cost to collapse a two-photon state to $\ket{\Psi^-}$ with this simple setup is $\cos^{-1} (1/\sqrt{2})=\pi/4$ using the fact that it is a 50-50 BS.

Let us consider the time-energy cost for the ideal measurement with a projection onto $\ket{\Psi^-}$.
Obviously, there is a POVM element $\ket{\Psi^-}\bra{\Psi^-}$ and following
Corollary~\ref{cor-column-Mj-bound}, we can see that a column of it has norm $1/\sqrt{2}$.
So, by Eq.~\eqref{eqn-column-Mj-bound}, the cost lower bound is $\pi/4$.
Therefore, the above implementation with one BS is optimal since it achieves this bound.

\begin{figure}[t!]
\begin{center}
\includegraphics[width=.75\columnwidth]{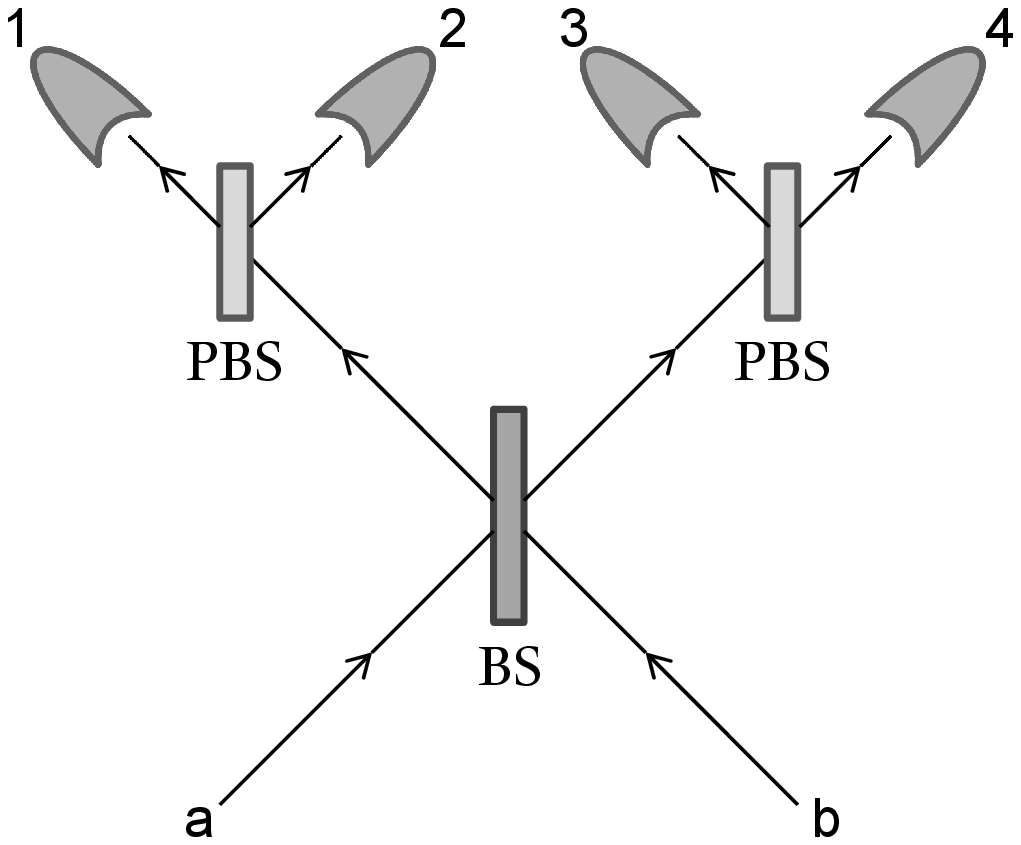}
\caption{
\label{fig-Bell2}
Bell measurement for $\ket{\Psi^-}$ and $\ket{\Psi^+}$.
}
\end{center}
\end{figure}

\subsubsection{Two Bell states}
A more complicated setup, the Innsbruck detection scheme ~\cite{Weinfurter:1994:Bell,Braunstein:1995:Bell,Michler:1996:Bell} , as shown in Fig.~\ref{fig-Bell2}, can project onto two Bell states.
Coincidence detections at detectors 1 and 4 or at 2 and 3 correspond to projection onto $\ket{\Psi^-}$.
Coincidence detections at detectors 1 and 2 or at 3 and 4 correspond to projection onto $\ket{\Psi^+}$.
The event of having two particles at any one of the four detectors could have been triggered by $\ket{\Phi^+}$ or $\ket{\Phi^-}$.

The time-energy cost for the ideal measurement with projections onto $\ket{\Psi^\pm}$ is lower bounded by $\pi/4$, argued as above.
We construct a $U$ with these two projections in order to obtain an upper bound:
\begin{align*}
U&=\ket{0} \bra{\Psi^-}_{ab} + \ket{1} \bra{\Psi^+}_{ab} + \ket{2} \bra{ \updownarrow \updownarrow}_{ab} +  \ket{3} \bra{ \leftrightarrow \leftrightarrow}_{ab}
\\
&=
\begin{bmatrix}
\frac{1}{\sqrt{2}} & -\frac{1}{\sqrt{2}} & 0 & 0
\\
\frac{1}{\sqrt{2}} & \frac{1}{\sqrt{2}} & 0 & 0
\\
0 & 0 & 1 & 0
\\
0 & 0 & 0 & 1
\end{bmatrix}
\end{align*}
where $U$ 
acts on states specified in the basis
$\{\ket{\updownarrow \leftrightarrow},
\ket{\leftrightarrow \updownarrow }, \ket{ \updownarrow \updownarrow}, \ket{ \leftrightarrow \leftrightarrow}\}$
and produces the detection events labeled as $\ket{j}$, $j=1,2,3,4$.
It is clear that $\maxnorm{U}=\pi/4$.
Therefore, the time-energy cost for the ideal measurement with projections onto $\ket{\Psi^\pm}$ is $\pi/4$.

Comparison between the time-energy cost for the ideal measurement and the cost for the actual implementation may subject to interpretations.
We may compute the overall cost for all the linear optics devices responsible for (i) only the transformation or (ii) the transformation and detection.
The detection part is for detecting the horizontal and vertical qubit states and it consists of a polarizing beam splitter (PBS) and two detectors.
One may argue that this part is used anyway to detect the original input qubit when no transformation is involved and so it should not be included.
On the other hand, 
including the detection part in the overall cost also makes sense 
since sometimes it is not needed (for example in the one Bell state measurement); 
also, it is specific to linear optics implementations and 
we may want to include all costs due to this type of implementations
when our consideration is not restricted to this type.
Here, we adopt interpretation (ii) since it is the presence of the two PBS that 
enables the projections onto two Bell states.
As such, the time-energy cost for the Innsbruck scheme certainly costs more than $\pi/4$ since it contains a 50-50 BS and two PBS,
and the BS already costs $\pi/4$.
To find time-energy cost for a PBS, consider its unitary representation
for transforming the polarization states of the two input modes:
\begin{align}
U_\text{PBS}=
e^{i \chi}
\begin{bmatrix}
\ket{\updownarrow}\bra{\updownarrow} & \ket{\leftrightarrow}\bra{\leftrightarrow} 
\\
\ket{\leftrightarrow}\bra{\leftrightarrow} & \ket{\updownarrow}\bra{\updownarrow}
\end{bmatrix}
\label{eqn-U-PBS}
\end{align}
which has eigenvalues $-e^{i \chi}$, $e^{i \chi}$, $e^{i \chi}$, and $e^{i \chi}$.
With $\chi=\pi/2$, the smallest time-energy cost is $\maxnorm{U_\text{PBS}}=\pi/2$.

\subsection{Time-energy cost for general measurements on linear optical qubits}
\label{sec-example-linear-optics}

\begin{figure}[t!]
\begin{center}
\includegraphics[width=\columnwidth]{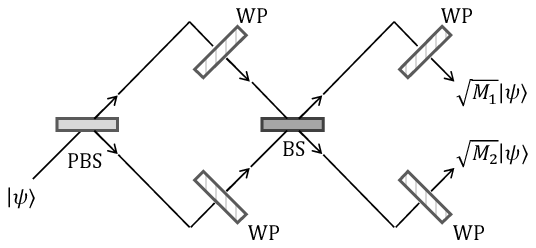}
\caption{
\label{fig-rank-2}
An implementation of the POVM in Eq.~\eqref{eqn-example-linear-optics-POVM}.
}
\end{center}
\end{figure}

A scheme for general measurements on linear optical qubits was proposed in Ref.~\cite{Ota:2012:optics-measurement}.
We analyze the time-energy cost for their measurement implementation 
shown in Fig.~\ref{fig-rank-2} (which is Fig.~1 of Ref.~\cite{Ota:2012:optics-measurement}), consisting of, sequentially, 
a PBS, two wave plates (WP), a BS, and two WP.
The input state is polarization encoded: 
$\ket{\psi}=c_\text{H} \ket{\leftrightarrow}+c_\text{V} \ket{\updownarrow}$.
The POVM elements to be implemented are
\begin{equation}
\begin{aligned}
M_1&=
\cos^2 \varphi \ket{m_+}\bra{m_+}+\sin^2 \varphi \ket{m_-}\bra{m_-}
\\
M_2&=
\sin^2 \varphi \ket{m_+}\bra{m_+}+\cos^2 \varphi \ket{m_-}\bra{m_-}
\end{aligned}
\label{eqn-example-linear-optics-POVM}
\end{equation}
where we assume $w=0$ in the implementation of Ref.~\cite{Ota:2012:optics-measurement}.
Here, $\{\ket{m_\pm}\}$ form an orthonormal basis.
We assume that $0\le\varphi\le\pi/2$.

We first compute the time-energy cost for the implementation.
For simplicity, we only consider the PBS and BS, which will give us a cost lower bound.
In the implementation, the PBS is the one in Eq.~\eqref{eqn-U-PBS} and the BS is 
the one in Eq.~\eqref{eqn-U-BS} with reflectivity $|r|=\cos \varphi$.
Thus,
$\maxnorm{U_\text{PBS}}=\pi/2$ and
$\maxnorm{U_\text{BS}}=\varphi$.

The total evolution time $t_\text{tol}$ is split between the PBS and BS:
\begin{equation}
\label{eqn-example-optics-total-time}
t_\text{tol}=t_\text{PBS}+t_\text{BS}
\end{equation}
and the total energy is thus
$\pi/2/t_\text{PBS}+\varphi/t_\text{BS}$.
The optimal split between $t_\text{PBS}$ and $t_\text{BS}$ is found by 
\dmathX2{
E_\text{tol}^\text{impl} &=& 
\min_{t_\text{PBS},t_\text{BS}} &
\frac{\pi}{2 t_\text{PBS}}+\frac{\varphi}{t_\text{BS}}
&eqn-example-linear-optics-Eimpl\cr
&&
\text{s.t.} &
t_\text{tol}=t_\text{PBS}+t_\text{BS}
\cr
}
which can be solved analytically easily.

Next, we obtain
the time-energy cost for the POVM in Eq.~\eqref{eqn-example-linear-optics-POVM} 
by solving
Eq.~\eqref{eqn-TE-POVM-general-lower-bound}.
We can solve it by going through all $12$ permutations for $\tilde{F}$ to get
$$
\maxnorm{\mathcal M}
\ge
\begin{cases}
\varphi & \text{if } 0 \le \varphi < \pi/4 \\
\frac{\pi}{2}-\varphi & \text{if } \pi/4 \le \varphi < \pi/2
\end{cases}
$$
with, for the case $\varphi \in [0,\pi/4)$,
$$
\tilde{F}=
\begin{bmatrix}
\cos \varphi & 0\\
0 & \cos \varphi
\end{bmatrix}
$$
which is formed by taking the first row of $\sqrt{M_1}$ and the second row of $\sqrt{M_2}$, and 
for the case $\varphi \in [\pi/4,\pi/2)$,
$$
\tilde{F}=
\begin{bmatrix}
\sin \varphi & 0\\
0 & \sin \varphi
\end{bmatrix}
$$
which is formed by taking 
the first row of $\sqrt{M_2}$ and the second row of $\sqrt{M_1}$.
Since these minimizing $\tilde{F}$ are diagonal for both cases,
Theorem~\ref{thm-TE-POVM-general-special-case} implies that
\begin{align*}
\maxnorm{\mathcal M}
&=
\begin{cases}
\varphi & \text{if } 0 \le \varphi < \pi/4 \\
\frac{\pi}{2}-\varphi & \text{if } \pi/4 \le \varphi < \pi/2
\end{cases}.
\end{align*}
Using $\maxnorm{\mathcal M}$ as the time-energy cost, we have
\begin{align}
E_\text{tot}^\text{ideal}
\equiv
\frac{
\maxnorm{\mathcal M}
}
{t_\text{tot}}.
\label{eqn-example-linear-optics-Eideal}
\end{align}

We can see how much more energy is used for the same time $t_\text{tot}$ in the actual implementation compared to the ideal one by computing $E_\text{tot}^\text{impl}/E_\text{tot}^\text{ideal} \triangleq r(\varphi)$ which turns out to be independent of $t_\text{tot}$.
Figure~\ref{fig-linear-optics} shows that result and it can be seen that the PBS causes a significant increase in the energy cost for small and large $\varphi$.

\begin{figure}[t!]
\begin{center}
\includegraphics[width=\columnwidth]{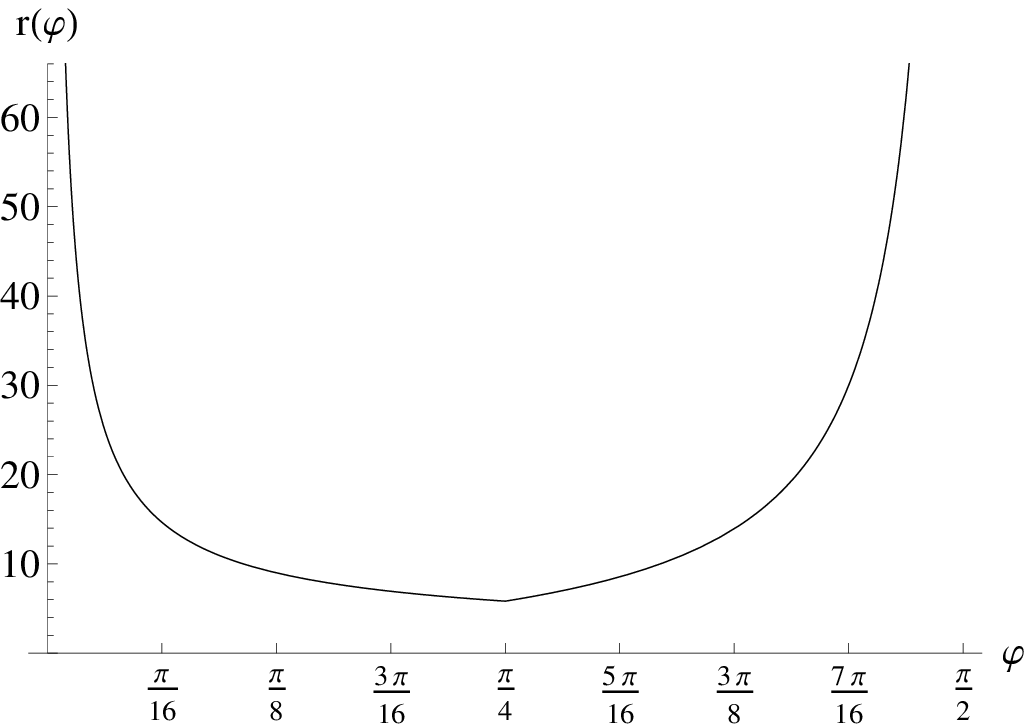}
\caption{
\label{fig-linear-optics}
Ratio of the energy of the measurement implementation with linear optics (Eq.~\eqref{eqn-example-linear-optics-Eimpl}) to the minimum energy (Eq.~\eqref{eqn-example-linear-optics-Eideal}):
$r(\varphi)=E_\text{tot}^\text{impl}/E_\text{tot}^\text{ideal}$.
}
\end{center}
\end{figure}

\subsection{Time-energy cost for unambiguous state discrimination}
\label{sec-example-usd}

We analyze the time-energy cost for unambiguous state discrimination (USD) of geometrically uniform (GU) states~\cite{Eldar:2003:USD}.
A set of GU states generated by a single normalized state $\ket{\phi} \in \IC^n$ is $\mathcal S=\{\ket{\phi_i}=U_i \ket{\phi}, U_i \in \mathcal G \}$, where $\mathcal G$ is a finite group of unitary matrices $\{U_i \in \myUgrp(n), i=1,\dots,K-1\}$ such that $U_i U_j \in \mathcal G$ and $U_i^\dag \in \mathcal G$ for all $i,j$.
We assume the states in $\mathcal S$ have equal prior probability $1/(K-1)$.
Theorem~4 of Ref.~\cite{Eldar:2003:USD} proves that the POVM $\mathcal M$ that unambiguously discriminates these states 
with the minimum inconclusive result consists of $K$ POVM elements
\begin{align*}
M_i&=p \kett{\tilde{\phi}_i} \braa{\tilde{\phi}_i}, \text{    for } i=1,\dots,K-1
\\
M_K&=I-\sum_{i=1}^{K-1} M_i
\end{align*}
where $\{ \kett{\tilde{\phi}_i} = U_i \kett{\tilde{\phi}} , U_i \in \mathcal G\}$,
$\kett{\tilde{\phi}}=(\Phi \Phi^\dag)^{-1} \ket{\phi}$,
$\Phi$ is a matrix of columns $\ket{\phi_i}$, and 
$\sqrt{p}$ is the smallest singular value of $\Phi$.
Here, $(\Phi \Phi^\dag)^{-1}$ is the Moore-Penrose pseudoinverse of $\Phi \Phi^\dag$.
Note that $\kett{\tilde{\phi}_i}$ is not necessarily normalized.
It turns out that this optimal USD measurement produces equal probabilities for detecting each state in $\mathcal S$.
This detection probability is
\begin{align*}
&
\operatorname{Pr}(
\text{concluding } i
\text{ given that }\ket{\phi_i} \text{ is emitted})
\\
=&
\bra{\phi_i} M_i\ket{\phi_i}
=
p
=
\sigma_\text{min}^{2}(\Phi).
\end{align*}

We are interested in the time-energy cost of this USD measurement.
We apply Eq.~\eqref{eqn-TE-POVM-general-lower-bound2} to lower bound $\maxnorm{\mathcal M}$.
The single non-zero singular value of $M_i, i=1,\dots,K-1$ is
\begin{align*}
p \braakett{\tilde{\phi}_i}{\tilde{\phi}_i}
&=
p \:
\bra{\phi}
(\Phi \Phi^\dag)^{-2} \ket{\phi}
\\
&=
\sigma_\text{min}^{2}(\Phi)
\bra{\phi}
(\Phi \Phi^\dag)^{-2} \ket{\phi}.
\end{align*}
Now, let's focus on $M_K$.
Note that $T=\sum_{i=1}^{K-1} M_i$ has rank at most $K-1$ and
thus $M_K$ has at least $n-K+1$ eigenvalues of one.
Also, 
$T$ has an eigenvalue of one since otherwise we would have increased $p$ and the original POVM was not optimal.
This means that $M_K$ has 
at least one eigenvalue of zero.
We need to find the $n$th largest singular value $\sigma_n$ among all singular values of all $\sqrt{M_i}$.
The first $n-K+1$ largest singular values are equal to one coming from $\sqrt{M_K}$.
The next $K-2$ singular values come from any of $\sqrt{M_i}, i=1,\dots,K$.
And the next one (i.e., the $n$th one)
must be 
$$
\sigma_n=
\sigma_\text{min}(\Phi)
\sqrt{
\bra{\phi}
(\Phi \Phi^\dag)^{-2} \ket{\phi}
}.
$$
coming from any one of $\sqrt{M_i}, i=1,\dots,K-1$.
Therefore, the time-energy cost for the optimal USD measurement $\mathcal M$ for GU states with equal prior probabilities is
\begin{align}
\maxnorm{\mathcal M}
\ge
\cos^{-1} 
\left[
\sigma_\text{min}(\Phi)
\sqrt{
\bra{\phi}
(\Phi \Phi^\dag)^{-2} \ket{\phi}
}
\right].
\label{eqn-cost-USD-lower-bound}
\end{align}

As a numerical example, we consider $\bar{K}\equiv K-1$ coherent states of the same mean photon number $|\alpha|^2$ but with different phases:
$$
\ket{\phi_j}=e^{|\alpha|^2/2} \sum_{m=0}^\infty \frac{\alpha_j^m}{\sqrt{m!}} \ket{m}
$$
where $j=1,\dots,\bar{K}$, $\alpha_j=\alpha \: e^{i 2 \pi  (j-1)/\bar{K}}$, and $\ket{m}$ are the boson number states.
Note that $\ket{\phi_j}=U^j \ket{\phi_0}$ with
$$
U=\sum_{m=0}^\infty e^{i 2\pi m/\bar{K}} \ket{m}\bra{m}.
$$
Therefore, $\ket{\phi_j}$ are GU states.
We compute the lower bound of the time-energy cost for the optimal USD measurement $\mathcal M$ that distinguishes $\ket{\phi_j}$, $j=1,\dots,\bar{K}$.
For simplicity, we approximate $\ket{\phi_j}$ and $U$ by truncating the sums to the first $50$ terms, which is reasonable since we consider $|\alpha|$ to be small.
Thus, we consider the states to be 50-dimensional.
The lower bounds of the time-energy costs using Eq.~\eqref{eqn-cost-USD-lower-bound} is shown in Fig.~\ref{fig-usd}.
Among the four intensities plotted, the USD measurement corresponding to the highest intensity case has the smallest lower bound of the time-energy cost and thus may actually require a smaller time-energy cost.
Also, the figure suggests that it takes more time-energy cost to distinguish a higher number of states.
Interestingly, the cost lower bound saturates to some value as the number of states increases.
This may indicate that a finite time-energy resource is enough to distinguish any number of states (for a fixed mean photon number).
\begin{figure}[t!]
\begin{center}
\includegraphics[width=\columnwidth]{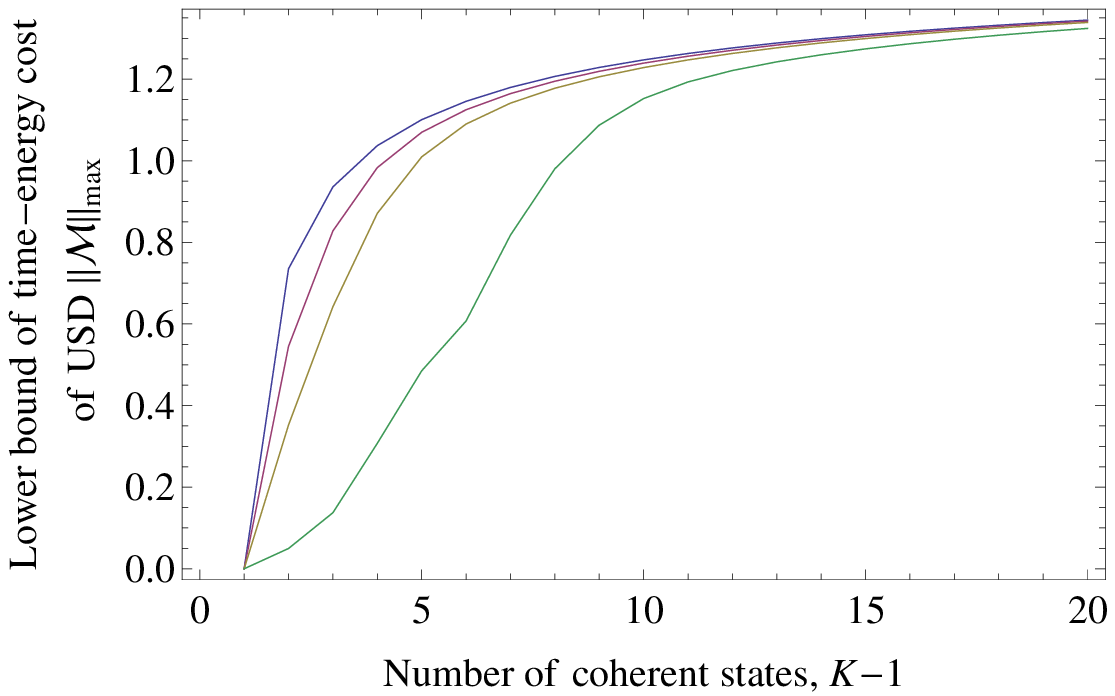}
\caption{
\label{fig-usd}
Lower bound of the time-energy cost for the optimal USD for distinguishing $K-1$ symmetric coherent states (Eq.~\eqref{eqn-cost-USD-lower-bound}).
The four curves from top to bottom correspond to mean photon number $|\alpha|^2=.1, .5, 1, 3$.
}
\end{center}
\end{figure}

\section{Concluding remarks}
\label{sec-conclusion}

We propose and investigate the time-energy cost for POVMs, along a similar line as our previous work for unitary transformations and quantum channels.
We motivate our definition for the time-energy cost by a TEUR.
To find the cost, a POVM is regarded as a quantum channel 
embedded in a unitary transformation in a larger Hilbert space.
The minimum cost among all unitary transformations implementing this POVM is the cost of the POVM.
We proved formulae for computing POVM time-energy cost based on the POVM elements.
When we only optimize over the ordering of the POVM elements in the larger unitary transformation, we obtain the cost in Eq.~\eqref{eqn-TE-POVM-permute-element-index} which depends on the minimal singular value of some element.
A POVM element may correspond to multiple detection events.
When we also optimize over the detection events of the POVM elements, we obtain lower bounds to the cost in Eq.~\eqref{eqn-general-permutation-method-Q-identity-1} and \eqref{eqn-TE-POVM-general-lower-bound}.
Under a special case satisfying the Hermitian condition, the cost is given by Eq.~\eqref{eqn-TE-POVM-general-special-case1}.

The time-energy cost of a POVM can be used as a benchmark for the efficiency of actual experiments.
We compared the costs of the ideal POVMs and the actual linear optics experiments for the Bell measurements and a POVM with rank-2 elements.
We saw that the Bell measurement for one Bell state is optimal but that for two Bell states is not.
Also, the implementation for the POVM with rank-2 elements may not be optimal.
We computed the lower bound to the time-energy cost for the optimal USD for distinguishing symmetric coherent states.
Our result suggests that more time-energy resource is needed to distinguish more states, in line with intuition, but interestingly the cost lower bound saturates as the number of states increases.
This may indicate that a finite time-energy resource is enough to distinguish any number of states.

\section*{Acknowledgments}%
We thank H.-K. Lo for enlightening discussion.
This work is supported in part by
RGC under Grant No. 700712P from the HKSAR Government.

\appendix

\section{Summary of previous work}
\label{app-summary}

We summarize the results of Ref.~\cite{Fung:2013:Time-energy} for quantum channels that are useful to this work.
Given a matrix $U$, the submatrix formed from columns $a$ to $b$ inclusively is denoted by $U_{[a,b]}$ with $a \le b$.

\subsection{Partial $U$ problem with $n$ vectors}

Solving problem~\eqref{eqn-problem-partial-with-g} means
finding $U \in \myUgrp(r)$ where $r=\KrausDim m$ with
the smallest $\maxnorm{U}$ of the form
\begin{align}
\label{eqn-U-with-missing-columns}
U=
\begin{bmatrix}
\ket{b_1} & \ket{b_2} & \dots & \ket{b_n} 
& * & * & \cdots & *
\end{bmatrix}
\end{align}
where the first $n$ columns
are orthogonal and $n \le r$.
We formulated this problem in Ref.~\cite{Fung:2013:Time-energy} as 
finding such $U$ that transforms $\ket{e_i} \longrightarrow \ket{b_i}$  for all $i=1,\ldots,n$:
\dmathX2{
\maxnorm{U_{[1,n]}} &\equiv& \displaystyle\min_{U} & 
\maxnorm{ U }\cr
&&\text{s.t.}&U\ket{e_i} = \ket{b_i}  \:\: \text{for all }i=1,\ldots,n,\cr
&&&\text{with }U \in \myUgrp(r).&eqn-problem-original-min-U\cr
}
where $\ket{e_i}$ is the unit vector with $1$ at the $i$th entry and $0$ everywhere else.
Note that the notation $U_{[1,n]}$ means that the columns $1$ to $n$ of $U$ are fixed as in Eq.~\eqref{eqn-U-with-missing-columns}.
In other words,
$$
\maxnorm{ g(F_1,F_2,\dots,F_{\KrausDim}) } 
=
\maxnorm{U_{[1,n]}}.
$$

\subsection{Partial $U$ problem with one vector}
Consider a special case.
The ``partial $U$ problem'' \eqref{eqn-problem-original-min-U} with 
only one vector
has the following solution~\cite{Fung:2013:Time-energy}:

\dmathX2{
\maxnorm{U_{[i,i]}}
&\equiv& \min_{U} & \maxnorm{ U }\cr
&&\text{s.t.}&U\ket{e_i} = \ket{b_i}\text{ with }
U \in \myUgrp(r)&eqn-problem-min-U-single-vector\cr
&=&&\hspace{-18pt}\cos^{-1}\left[\operatorname{Re}(\braket{e_i}{b_i})\right].\cr
}
We remark the solution does not depend on the actual form of $\ket{e_i}$ and $\ket{b_i}$.
Note that the notation $U_{[i,i]}$ means that column $i$ of $U$ is fixed.

\subsection{Partial $U$ problem -- lower bound}

Since the feasible set of problem \eqref{eqn-problem-min-U-single-vector} contains that of problem \eqref{eqn-problem-original-min-U}, 
$$
\maxnorm{U_{[1,n]}}
\ge
\maxnorm{U_{[i,i]}} \text{ for all } i=1,\dots,n.
$$
Thus, a lower bound to the time-energy cost is
\begin{equation}
\label{eqn-maxnorm-lower-bound-diagonal1}
\maxnorm{U_{[1,n]}}
\ge
\max_{1\leq i \leq n}
\cos^{-1}\{\operatorname{Re}[U(i,i)]\}.
\end{equation}
where $\cos^{-1}$ always returns an angle in the range $[0,\pi]$.
Note that $\braket{e_i}{b_i}$ simply corresponds to the $i$th diagonal element of $U$.

Based on Eq.~\eqref{eqn-maxnorm-lower-bound-diagonal1}, two more bounds using the eigenvalues and singular values are derived:
\begin{align}
\label{eqn-maxnorm-lower-bound-diagonal2}
\maxnorm{U_{[1,n]}}
&\ge
\max_{1\leq i \leq n}
\cos^{-1}\{\operatorname{Re}[
\lambda_i(F_1^\text{top})
]\}, \text{ and}
\\
\label{eqn-maxnorm-lower-bound-diagonal3}
\maxnorm{U_{[1,n]}}
&\ge
\cos^{-1} \left[ \sigma_\text{min}(F_1) \right]
\end{align}
where $\lambda_i$ denotes the $i$th eigenvalue of its argument and $\sigma_\text{min}$ denotes the minimum singular value of its argument.
To get Eqs.~\eqref{eqn-maxnorm-lower-bound-diagonal2} and \eqref{eqn-maxnorm-lower-bound-diagonal3},
we need the following lemma.
\begin{lemma}
\label{lemma-U-norm-unchanged-by-conjugation}
{\myrm
(Lemma~1 in Ref.~\cite{Fung:2013:Time-energy})
$$
\maxnorm{U_{[1,n]}}
=
\maxnorm{(\tilde{Q} U \tilde{Q}^\dag)_{[1,n]}}
$$
for any unitary matrix 
$Q \in \myUgrp(\ChannelDim)$ with
\begin{equation*}
\tilde{Q}=
\begin{bmatrix}
Q & 0
\\
0& \bf{1}
\end{bmatrix}
\in \myUgrp(r) .
\end{equation*}
}
\end{lemma}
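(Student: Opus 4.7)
The plan is to exhibit a bijection between the feasible sets of the two optimization problems defining $\maxnorm{U_{[1,n]}}$ and $\maxnorm{(\tilde{Q} U \tilde{Q}^\dag)_{[1,n]}}$, and to verify that this bijection preserves the objective $\maxnorm{\cdot}$. The map I would use is $V \mapsto V' := \tilde{Q} V \tilde{Q}^\dag$, which is clearly a bijection on $\myUgrp(r)$ since $\tilde{Q}$ is unitary.

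For the objective, unitary conjugation preserves the spectrum, so $\maxnorm{\tilde{Q} V \tilde{Q}^\dag} = \maxnorm{V}$ immediately. The nontrivial step is matching the feasibility constraints. By definition, $V$ is feasible for $\maxnorm{U_{[1,n]}}$ iff $V\ket{e_j} = U\ket{e_j}$ for $j = 1, \dots, n$, while $V'$ is feasible for $\maxnorm{(\tilde{Q} U \tilde{Q}^\dag)_{[1,n]}}$ iff $V'\ket{e_j} = \tilde{Q}U\tilde{Q}^\dag\ket{e_j}$ for $j = 1, \dots, n$. The key observation is the block structure of $\tilde{Q}$: since its lower-right block is the identity, for any $j \le n$ the vector $\tilde{Q}^\dag \ket{e_j}$ lies entirely in $\operatorname{span}\{\ket{e_1}, \dots, \ket{e_n}\}$. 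Therefore $V\tilde{Q}^\dag \ket{e_j}$ is a linear combination of only the first $n$ columns of $V$, and similarly for $U$.

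From this it follows directly that $V\ket{e_k} = U\ket{e_k}$ for all $k \le n$ is equivalent to $V\tilde{Q}^\dag\ket{e_j} = U\tilde{Q}^\dag\ket{e_j}$ for all $j \le n$. Left-multiplying the latter by $\tilde{Q}$ gives exactly $V'\ket{e_j} = \tilde{Q}U\tilde{Q}^\dag\ket{e_j}$. Hence $V$ is feasible for the LHS iff $V'$ is feasible for the RHS, and the bijection $V \leftrightarrow V'$ maps one feasible set onto the other while preserving $\maxnorm{\cdot}$. The two minima coincide.

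The only point that requires care is the no-leakage property: if the lower-right block of $\tilde{Q}$ were not the identity, then $\tilde{Q}^\dag\ket{e_j}$ for $j \le n$ would spill into coordinates beyond $n$, and $V\tilde{Q}^\dag\ket{e_j}$ would then depend on the free (unconstrained) columns of $V$, breaking the argument. So the lemma genuinely relies on the specific block form of $\tilde{Q}$, and this is the one place where I would be explicit about the block decomposition. The remainder is routine verification.
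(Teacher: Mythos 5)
Your proof is correct and is essentially the standard argument for this lemma (which this paper only imports from Ref.~\cite{Fung:2013:Time-energy} without reproducing its proof): conjugation by $\tilde{Q}$ is a spectrum-preserving bijection of $\myUgrp(r)$ that maps one feasible set onto the other, precisely because the block form of $\tilde{Q}$ keeps $\tilde{Q}^\dag\ket{e_j}$, $j\le n$, inside $\operatorname{span}\{\ket{e_1},\dots,\ket{e_n}\}$. The only point worth making explicit is that the converse direction of your feasibility equivalence uses the invertibility of $Q$ (so that the vectors $\tilde{Q}^\dag\ket{e_j}$ span that whole subspace), but this is indeed routine.
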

To get Eq.~\eqref{eqn-maxnorm-lower-bound-diagonal2},
we apply Schur decomposition to 
the first $n$ rows of $F_1$ (which is a square matrix denoted as $F_1^\text{top}$)
to obtain its eigenvalues on the diagonal of a triangular matrix and use Lemma~\ref{lemma-U-norm-unchanged-by-conjugation} to cancel out the left and right unitary matrices.
This triangular matrix becomes the new top-left block of $U$.
To obtain Eq.~\eqref{eqn-maxnorm-lower-bound-diagonal3},
we apply singular value decomposition to $F_1$ to get $F_1=V D Q$ ($V$ and $Q$ are unitary and $D$ is diagonal) and use Lemma~\ref{lemma-U-norm-unchanged-by-conjugation} to cancel out the right unitary matrix $Q$ giving the new $U(i,i)= (V D)(i,i)$.
Next, note that $\operatorname{Re}[(V D)(i,i)] \le D(i,i)$ since the magnitude of every element of $V$ (being unitary) is at most one.
Thus, Eq.~\eqref{eqn-maxnorm-lower-bound-diagonal3} is a looser bound than Eq.~\eqref{eqn-maxnorm-lower-bound-diagonal1}.

In general, we may take the maximum of the RHS of 
Eqs.~\eqref{eqn-maxnorm-lower-bound-diagonal1}-\eqref{eqn-maxnorm-lower-bound-diagonal3} to serve as the lower bound.

\subsection{Partial $U$ problem -- diagonal $F_1$}

An exact time-energy cost is obtained for a special case.
If the top-left $n \times n$ block of $U$ is diagonal (i.e., $F_1$ is diagonal if it is square),
we have 
\begin{align}
\label{eqn-maxnorm-special-case1}
\maxnorm{U_{[1,n]}}
=
\max_{1 \le i \le n}
\cos^{-1}\{\operatorname{Re}[U(i,i)]\}
\end{align}
(c.f. Eq.~(44) of Ref.~\cite{Fung:2013:Time-energy}).

In general, if $F_1$ is Hermitian, 
it can be diagonalized 
and, based on Lemma~\ref{lemma-U-norm-unchanged-by-conjugation},
Eq.~\eqref{eqn-maxnorm-special-case1} becomes
\begin{align}
\label{eqn-maxnorm-special-case2}
\maxnorm{U_{[1,n]}}
=
\cos^{-1}
\left[
\lambda_\text{min} (F_1)
\right],
\end{align}
where $\lambda_\text{min}$ denotes the minimum eigenvalue of its argument.

\bibliographystyle{apsrev4-1}

\bibliography{paperdb,paperdb_2}

\end{document}